\documentclass[jair,twoside,11pt,theapa]{article}

\usepackage{hyperref}
\usepackage{jair, theapa, rawfonts}

\usepackage{relsize}
\usepackage{scalerel}
\usepackage[title]{appendix}
\usepackage{comment}

\ShortHeadings{Worst-case Bounds on Power vs. Proportion in Weighted Voting Games}{Gafni, Lavi \& Tennenholtz}
\firstpageno{1}


\usepackage{csquotes}
\usepackage{mathtools}
\usepackage{amsmath}
\usepackage{amsthm}
\usepackage{bm}
\usepackage{bbm}
\usepackage{amssymb}
\usepackage[]{algorithm2e}
\setcounter{tocdepth}{3}
\usepackage{graphicx}
\usepackage{amsmath}
\usepackage[cmyk]{xcolor}
\selectcolormodel{cmyk}

\usepackage{thmtools}
\usepackage{thm-restate}
\usepackage[]{algorithm2e}
\usepackage{pgfplots}


\newtheorem{lemma}{Lemma}[section]
\newtheorem{claim}{Claim}[section]

\newtheorem{theorem}{Theorem}[section]
\newtheorem{conjecture}{Conjecture}[section]
\newtheorem{remark}{Remark}[section]
\newtheorem{corollary}{Corollary}[section]

\newtheorem{definition}{Definition}[section]
\newtheorem{example}{Example}[section]
\allowdisplaybreaks

 \pdfinfo{
/Title (Worst-case Bounds on Power vs. Proportion in Weighted Voting Games with an Application to False-name Manipulation)
/Author (Yotam Gafni$^1$, Ron Lavi$^{1,2}$, Moshe Tennenholtz$^1$ \\
$^1$ Technion - Israel Institute of Technology \\
$^2$ University of Bath, UK)
}
\usepackage{setspace}

\setcounter{secnumdepth}{2} 

%
\title{Worst-case Bounds on Power vs. Proportion in Weighted Voting Games with Application to False-name Manipulation}

\date{}

\author{\name Yotam Gafni \email yotam.gafni@campus.technion.ac.il \\
\addr Technion - Israel Institute of Technology \\
Haifa 32000 Israel
       \AND
       \name Ron Lavi \email ronlavi@ie.technion.ac.il \\
       \addr Technion - Israel Institute of Technology \\
Haifa 32000 Israel \\
University of Bath, UK \\
Bath BA2 7AY, United Kingdom
       \AND
       \name Moshe Tennenholtz \email moshet@ie.technion.ac.il \\
       \addr Technion - Israel Institute of Technology \\
Haifa 32000 Israel}

 \pgfplotsset{compat=1.14}
\begin{document}

 
\maketitle


\begin{abstract}
Weighted voting games apply to a wide variety of multi-agent settings. They enable the formalization of power indices which
quantify the coalitional power of players.
We take a novel approach to the study of the power of big vs.~small players in these games. We model small (big) players as having single (multiple) votes. The aggregate relative power of big players is measured w.r.t.~their votes proportion. 
For this ratio, we show small constant worst-case bounds for the Shapley-Shubik and the Deegan-Packel indices. In sharp contrast, this ratio is unbounded for the Banzhaf index.
As an application, we define a false-name strategic normal form game where each big player may split its votes between false identities, and study its various properties.
Together, our results provide foundations for the implications of players' size, modeled as their ability to split, on their relative power.
\end{abstract}


\section{Introduction}
Weighted Voting Games (WVGs) are a class of cooperative games that naturally appear in diverse settings, such as parliaments, councils, and firm shareholders. In recent years, they were found to naturally appear in multi-agent systems such as 
VCG auctions \cite{bachrachVCG} and other online economic systems. WVGs are defined by a set of players, their weights, and a threshold $T$. A set of players forming a coalition must have an aggregate weight of at least $T$. It is natural to ask: What is a player's power to influence decisions, or, alternatively, what is a player's share of the benefit of forming a coalition? This power measure does not necessarily comply with the player's proportional weight. For example, consider a WVG with a large threshold $T$, a big player with weight $T-1$ and a small player with weight $1$. Despite the large discrepancy in their weights, a consensus is required for any motion to pass, suggesting they have equal power.
This view of power considers a player's pivotal role as ``king-maker''
-- ``To the victors go the spoils''. 

Due to this reason, cooperative game theory studies power indices to capture 
the true effective power of players in WVGs.
This literature views the power index of a player as a numeric predictor of utility.
The most prominent power indices include the Shapley-Shubik index \cite{shapleyshubik}, which stems from the more general Shapley value \cite{shapley1952value}, and the Banzhaf index \cite{banzhaf1964weighted}. Other power indices emphasize different aspects of the power structure, such as the Deegan-Packel index \cite{deeganPackel}, which we also study. Our work lies in the intersection of three strands of WVG literature:

\subsubsection{Big vs.~Small Players and Group Power.}
We layout a WVG model with big vs.~small players and study the power of big players compared to their vote proportion, assuming that all player weights are natural numbers and consider all players with weight larger than 1 as ``big'' and all players with weight 1 ``small''.
%
The inequality in voting -- ``big vs.~small'' -- is a main drive for the study of power indices, going back to the formation of the US electoral college. \citeA{riker1986first} points out that
Luther Martin of Maryland, a staunch anti-federalist and one of the US founding fathers, analyzed the then-forming electoral college in a manner similar to the Banzhaf index. In the compilation by \citeA{antifederalist}, p.~50, Martin claims:
\begin{displayquote}
The number of delegates ought not to be in exact proportion to the number of inhabitants, because the influence and power of those states whose delegates are numerous, will be greater [even relative to their proportion] when compared to the influence and power of the other states...
\end{displayquote}
%
The contrast between big vs.~small players exists not only in traditional voting settings but also in modern contexts. For example, we see multiple situations in which several small websites (where here ``small'' is in terms of their number of users) aggregate their market power by forming a unified service platform to compete with a big incumbent website. Similarly, we see an aggregation of computational power (e.g., mining pools in Bitcoin, consortia of cloud computing services). The other direction of a big player splitting itself to multiple small identities also exists, even when such a split is costly in terms of advertising and maintaining the brand, e.g., flight search engines and web hosting services.

\citeA{shapleyshubik} demonstrate that in the settings of one big player and many small players, the power of the big player can be higher than its proportional weight. They do not answer (nor ask) the question of how large this ratio can be. They also do not analyze the opposite direction, of whether this ratio is bounded below by some constant, possibly smaller than 1.
As we explain below, our results generalize and answer these questions. Beyond the case of one big player and many small players, which we completely characterize, our model and results extend in two aspects:

\begin{itemize}

\item \emph{Arbitrary number of big (and small) players} - We obtain results regarding the power vs.~proportion of any specific one big player and regarding the {\em aggregate} power of big players. As the distinction between big and small immediately suggests, the relative power of these groups now becomes the focus.
Examples for such groups are the top 1\% wealthy people, the G7 countries, Bitcoin's large miners.
These settings typically involve a few big players and many small players. \citeA{oceanicgames} and \citeA{neyman1981singular} take this to an extreme by considering so-called ``oceanic games'' where there are a few significant large players and a continuum of small players. In contrast, our results are not asymptotic and hold for any arbitrary number of big and small players.

\item \emph{The Banzhaf and Deegan-Packel indices} - Since different power indices structuring naturally encapsulates different aspects of strategic power, it is important to compare the results of different indices given our model, and more so as the Banzhaf index gives qualitatively different results than the other indices.

\end{itemize}



\subsubsection{Power vs.~Proportion}
Our main theoretical results, given in Sections~\ref{powerVsProportionShapley}-\ref{powerVsProportionDP}, characterize the ratio between the aggregate
power of the big players and their aggregate proportional weight for different power indices.
Most previous literature analyzes ways to adjust voters' voting weights in order to equate voting power to actual weight.
An early suggestion by \citeA{penrose1946elementary} is that, in the UN, states should be assigned seats proportional to the square root of their population, in order to achieve proportional representation for each citizen, worldwide, regardless of her state. \citeauthor{Somczyski2006PenroseVS} \citeyear{Somczyski2006PenroseVS,DoubleSquareRootVotingSystem} further suggest an improvement in the form of the \emph{double square root} voting system, where on top of assigning seats proportional to the square root, the voting threshold (quota) of the representative body itself is determined so to optimize proportionality. 

More recent attention to whether a good choice of the quota can attain proportionality is found in the works of  \citeA{zick2011quota,zick2013random,randomizedWVGQuota2014,randomizedWVGCharacterization2016} and \citeA{superIncreasingWeights2016}. Theoretical guarantees, experimental results, and probabilistic models are suggested, for which this occurs. For example, these works collectively establish that under some probabilistic assumptions, setting the threshold $T$ to be about 50\% of the total sum of weights results in power being equal to proportion with high probability. In contrast, our worst-case analysis of this problem does not depend on probabilistic assumptions that might not hold in reality, due to the independence assumptions or specific properties of the distributions. We also show examples, where the threshold is very close to 50\%, and the power is far from proportional. In addition, it may not be possible to tune the threshold $T$ because of exogenous dictates (e.g., important parliament votes, where a two-thirds majority is required) or because the model aims to capture some underlying reality (e.g., over the Internet) that constrains $T$.

A third approach focuses on the probabilistic modelling of the WVG weights. 
For example,
\citeA{jelnov2014voting} show that if player weights are sampled uniformly from the unit simplex, the expected Shapley-Shubik power of a player relative to its proportion goes to 1 with rapid convergence in the number of players. \citeA{PenroseLimitTheorem} study a different model where the ratio of the Shapley-Shubik index to proportional weight in infinite chains of game instances asymptotically approaches 1.
\citeA{PenroseLimitTheoremExperimental}  follow up with an experimental analysis of a similar model thus further verifying the previous conceptual conclusions.


\subsubsection{False-name manipulation in WVG}
As an application of our main results for power vs.~proportion, we define in Section \ref{FalseNameGameSection} a false-name strategic normal-form game where each big player may split its votes between false identities.
Power indices in weighted voting games have different aspects: They can either measure the level of \emph{influence} of the players over possible decisions, or they can measure what \citeA{felsenthalMachoverPpower} call the ``P-power'': The ability to extract rent, in terms of monetary transfers for their causes. In parliamentary politics, this may take the form of budget or regulation dedicated to special interest groups. \citeA{peytonYoung} emphasizes that a goal of power indices measurement is to be a measure of how much ``bribe'' a player can extract for her vote. This view is important to justify why false-name attacks are interesting and effective in WVGs: While in terms of influence, splitting into multiple players may not matter (as they still vote ideologically as a bloc), in terms of rent extraction, this restructuring of party landscape matters. 

\citeA{aziz2011false} are the first to study power indices in the context of false-name manipulation, showing upper and lower bounds on a player's gain (or loss) from splitting its votes into two parts, for the Shapley-Shubik and Banzhaf indices. They also address a range of computational issues, among them the decision problem of splitting into two equal parts, which is NP-hard for both indices. \citeA{comparisonComplexity} show that the decision problem of benefiting from splitting into two equal parts to be in PP, and \citeA{probPolytime} show it is PP-complete for the Shapley-Shubik index, PP-complete for the Banzhaf index with three equal splits, and PP-hard for both indices with general splits.

Our results contribute to the above literature on false-name splits by unifying it with the two previously mentioned aspects (`big vs.~small' and `power vs.~proportion') and by generalizing on two additional fronts:

\begin{itemize}

    \item \emph{General splits} - We consider splits into multiple identities, rather than splits into two or three identities.
 \citeA{lasisi2017false} initiated work on this more general problem, where they show some upper and lower bounds for the individual power gain from general splits compared to the original power. These bounds assume that only a single agent splits, whereas the bounds we provide hold under any combination of strategic manipulations by the agents.

    \item \emph{Global bounds on manipulation} - By our results for the power vs.~proportion we extract useful global bounds on manipulation. Regularly power is compared before and after splits. Since previous work shows the most basic questions in regard to successful power manipulation to be computationally hard, developing global performance bounds is important.

\end{itemize}

\subsection{Overview of Main Results}

\begin{itemize}
    \item 
    Our main theoretical results, given in Sections~\ref{powerVsProportionShapley}-\ref{powerVsProportionDP}, characterize the ratio between the aggregate power of the big players and their aggregate proportional weight. For the Shapley-Shubik index, this ratio is bounded by 2, and this analysis is asymptotically tight. In contrast, for the Banzhaf index, this ratio is unbounded. The Deegan-Packel index gives the same qualitative result as the Shapley-Shubik index: For this index, we prove an upper bound of $3$ for the ratio in the worst case and give an example where the ratio is asymptotically at least 2.
    
    We give additional examples showing that while these results hold for the aggregate power of all big players, they do not hold for the power of any individual player. In contrast to these upper bounds, we show that the power to proportion ratio cannot be lower bounded by any positive number. Thus, while the aggregate power of the big group cannot be much larger than its proportion, it can certainly be much smaller.
    
    \item 
    As an application of these results, given in Section~\ref{powerVsProportion},
    Section~\ref{FalseNameGameSection} defines a false-name strategic normal-form game where each big player may split its votes between false identities. Subsection~\ref{subsection:individualResults} characterizes the worst-case effects of splits on the individual power of a strategic big player in the false-name game. In the special case of one big player,
    if the big player splits its vote tokens to $k$ false identities, each with a weight of 1, its power will be equal to its proportional weight (since all power indices are symmetric). 
    By our results, this is never less than one-half of the power of the optimal split for the Shapley-Shubik index.
    On the other hand, if the big player decides not to split, in some cases its power is unboundedly small relative to its proportional weight.
    
    \item
    To generalize the above conclusion, derived for the case of one big player, to the aggregate power of any number of big players and any possible set of players' split strategies, we give a combinatorial conjecture for the Shapley-Shubik index. We support our conjecture by extensive experimentation. The conjecture claims that the big players may lose at most half of their aggregate power from any combination of splits.
    Overall the results can be interpreted in the following way: As a whole, the big players are not set to lose much from opening the voting game to false-name manipulations, while on the other hand some notion of fairness can be guaranteed as they can not exceed double their proportional size. 
\end{itemize}

\section{Preliminaries}
\label{powerVsProportion}

Weighted Voting Games (WVGs), starting with weighted majority games 
\cite{von1947theory,shapley1962simple}, 
aim to capture a situation where several players need to form a coalition. Each player has a weight, and a subset of players can form a coalition if their sum of weights passes a certain threshold. In this paper, we make a distinction between ``big'' versus ``small'' players, where small players have a weight of one. Formally,

\begin{definition} (Adapted from \citeR{shapley1962simple})
A weighted voting game is a tuple $\{A,m,T\}$ with 
\[
\begin{split}
& A = \{a_1,...,a_r\}, M = \overbrace{\{1,...,1\}}^m,  \quad 1\leq T \leq m + \sum\limits_{j=1}^r a_j,
\end{split}
\]
\noindent where $a_1,...,a_r,m,T \in \mathbb{N}$, there are $r$ ``big players'', $m$ ``small players'' of weight 1, and a coalition threshold $T$.\footnote{There is some loss of generality by fixing the parameters $a_i,T$ to be exact multiples of the weight of the  small  player. Our model can be slightly generalized as follows:
 Let $s$ be some minimal weight corresponding to some operational or regulatory minimal size of a venture, or to an electoral threshold for parliaments. Any  player  with  an  integer  weight $s \leq w < 2s$ is termed ``small'' as small players are the ones that  cannot  split. The model as presented corresponds to the case $s=1$ for tractability and readability but we believe that our results hold for the more general model as well. We give more details in the discussion section.} We at times denote the small players as $1_1,...1_m$. Note that $A$ is a multi-set. When we write $A \setminus \{i\}$, for some weight $i$, at most one occurrence of $i$ is removed from $A$.
\end{definition}

A basic question in WVGs is how to split the gains from forming a coalition among its members. One possible notion of fairness is to split gains in a way that is approximately proportional to the weights of the coalition members.

\begin{definition}
The proportional value of a weighted voting game is $\displaystyle P(A,m,T) = \frac{\sum\limits_{j=1}^r a_j}{m + \sum\limits_{j=1}^r a_j}.$
\end{definition}

However, reality tells us that many times the ``power'' of players is different than their proportional weight. Well-established literature on power indices formally studies this by looking at our setting as a cooperative game.
For the analysis we have the following value function $v(S)$ which describes whether a subset of players $S\subseteq A\cup M$ is able to form a coalition:
$$v(S) = \begin{cases} 
1 & \sum\limits_{s\in S}s \geq T \\
0 & otherwise.
\end{cases}$$
We next compare the aggregate power of the big players, using several three well-known power indices, to their aggregate proportional weight. 

\section{The Shapley-Shubik Power Index}
\label{powerVsProportionShapley}

Define the ordered tuple $A+M = (a_1,...,a_r,1_1,...,1_m)$. 
Let $S_{m+r}$ be the group of all permutations operating on $m+r$ objects. For some $\sigma \in S_{m+r}$,  $\sigma(A+M)$ is the ordered tuple which results by applying the permutation $\sigma$ to $A+M$. We usually omit the term $A+M$ when it is clear from the context. Define $\displaystyle \sigma|_p, \bar{\sigma}|_p$ as the set of all players (strictly, non-strictly) preceding player $p$ in permutation $\sigma(A+M)$. 
The permutation pivotal player indicator function for a player $p$ is 
$$\displaystyle \mathbbm{1}_{p,\sigma} = v(\bar{\sigma}|_p) - v(\sigma|_p).$$
In words, the indicator $\mathbbm{1}_{p,\sigma}$ is equal to one if the players preceding $p$ in the permutation $\sigma(A+M)$ do not form a coalition and adding $p$ enables the coalition formation. In such a case, we say that $p$ is pivotal for $\sigma$.
%
%
%
Note that $\mathbbm{1}_{p,\sigma} \in \{0,1\}$ and that each permutation has exactly one pivotal player.

\begin{definition} (Adapted from \citeR{handbookGameTheory})
The Shapley-Shubik power index of a weighted voting game (A,m,T) is 
\[
\begin{split}
& \phi_{p}(A,m,T) = \mathbb{E}_{\sigma \sim UNI(S_{m+r})}[\mathbbm{1}_{p,\sigma}], \\
\end{split}
\]
for a player $p$ (whether a big player $a_i$ or a small player $1_i$), where UNI is the uniform distribution over a discrete set.
\end{definition}

\noindent
This definition is a special case of the Shapley value applied to WVGs. Three well-known properties of this power index are:

\begin{itemize}

\item {\bf Symmetry. } $a_i = a_j \rightarrow \phi_{a_i}(A,m,T) = \phi_{a_j}(A,m,T)$.
This implies that for any two small players $i,j$, $\phi_{1_i}(A,m,T) = \phi_{1_j}(A,m,T)$. Thus, for simplicity, we sometimes write $\phi_1(A,m,T)$.

\item {\bf Efficiency. } $\sum\limits_{j=1}^r \phi_{a_j}(A,m,T) + m \phi_{1}(A,m,T) = 1$, i.e., the sum of all players' Shapley-Shubik indices is 1. 

\item {\bf Non-Negativity. } The Shapley-Shubik index of any player $p$ is non-negative: $\phi_{p}(A,m,T) \geq 0$. 


\end{itemize}

\begin{definition}
The Shapley-proportional ratios are the global supremum (infiumum) over all weighted voting games
$$\bar{R}_{\phi}\!=\!\sup_{A,m,T} \frac{\sum\limits_{j=1}^r \phi_{a_j}(A,m,T)}{P(A,m,T)}, \underbar{R}_{\phi}\!=\!\inf_{A,m,T} \frac{\sum\limits_{j=1}^r \phi_{a_j}(A,m,T)}{P(A,m,T)}.$$
Note that $\underbar{R}_{\phi} \geq 0$ because of non-negativity. 
\end{definition}

\begin{example}[$\bar{R}_{\phi}$ is at least 2]
\label{lowerBoundShapley}
For some $k\geq 2$, consider $A = \{k\}, m = k-1, T = k$. Then
$\phi_{a_1}(A,m,T) = 1$, while $P(A,m,T) = \frac{1}{2} + \frac{1}{4k - 2}$.
\end{example}
\noindent
In fact, this asymptotic lower bound is tight:
\begin{theorem}
\label{fairnessTheorem}
$\bar{R}_{\phi} = 2.$
\end{theorem}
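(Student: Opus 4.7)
The lower bound $\bar{R}_\phi \geq 2$ is immediate from Example~\ref{lowerBoundShapley}, which exhibits a sequence of games whose ratio tends to $2$; so the remaining task is to prove the matching upper bound $\sum_{j=1}^{r} \phi_{a_j}(A,m,T) \leq 2\,P(A,m,T)$ for every weighted voting game. My plan is to combine two elementary upper bounds on $\sum_j \phi_{a_j}$ with a short case analysis against $2P = 2W/(m+W)$, where I write $W = \sum_j a_j$ and $n = m + r$.

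The key step is the per-player inequality $\phi_{a_j} \leq a_j/n$. To prove it, I would fix an arbitrary ordering $\tau$ of the $n-1$ players other than $a_j$ and count the positions at which inserting $a_j$ into $\tau$ makes $a_j$ pivotal. Inserting $a_j$ right after the first $\ell$ players of $\tau$ makes $a_j$ pivotal if and only if the prefix sum $g_\tau(\ell)$ of the weights in $\tau$ lies in the integer window $[T - a_j, T-1]$. Since $g_\tau$ is strictly increasing (each player contributes a positive integer weight), each integer value is attained at most once, so at most $a_j$ positions work. Averaging over $\tau$ yields $\phi_{a_j} \leq a_j/n$, and summing over $j$ gives $\sum_j \phi_{a_j} \leq W/n$. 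I would pair this with the trivial bound $\sum_j \phi_{a_j} \leq 1$ coming from efficiency and non-negativity of the Shapley-Shubik index.

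It then remains to verify $\min\{1,\, W/n\} \leq 2W/(m+W)$, which I would do by splitting on whether $W \geq n$ or $W < n$. If $W \geq n$, use the bound $\sum_j \phi_{a_j} \leq 1$: the inequality $1 \leq 2W/(m+W)$ is equivalent to $W \geq m$, which follows since $W \geq n = m+r > m$. If $W < n$, use $\sum_j \phi_{a_j} \leq W/n$: the inequality $W/n \leq 2W/(m+W)$ rearranges to $W \leq m + 2r$, which holds because $W < m + r \leq m + 2r$ (using $r \geq 1$, which is the only non-degenerate regime since otherwise there are no big players).

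The main (and essentially only) technical obstacle is phrasing the per-player insertion-count bound $\phi_{a_j} \leq a_j/n$ cleanly via the strict monotonicity of the prefix sum $g_\tau$; everything else is algebraic bookkeeping with two cases. It is worth noting that this argument crucially exploits the permutation-averaging structure of the Shapley-Shubik index: no analogous strictly-monotonic prefix-sum interpretation is available for the Banzhaf index, which is consistent with the paper's later contrasting result that the Banzhaf ratio is unbounded.
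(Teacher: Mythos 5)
Your proof is correct, and it takes a genuinely different --- and considerably shorter --- route than the paper's. The paper obtains the upper bound by deriving a recursion for the small player's index $\phi_1$ (Lemma~\ref{recursionLemma}) and then proving by induction on $T$, with a delicate algebraic step deferred to Appendix~\ref{app:step2}, the lower bound $\phi_1(A,m,T) \geq \frac{m-\sum_i r_i\min(i,T)}{m\left(m+\sum_i r_i \min(i,T)\right)}$ (Lemma~\ref{shapleyThmInductionProof}), from which $\sum_j\phi_{a_j}\le 2P$ follows via efficiency and symmetry. You instead rely on the per-player bound $\phi_{a_j}\le a_j/(m+r)$ --- which is exactly the paper's Theorem~\ref{lemmaBasicUpperbound}, proved there by the same strictly-increasing-prefix-sum insertion argument you sketch, but used in the paper only for individual-player statements, not for the aggregate ratio --- sum it to get $\sum_j\phi_{a_j}\le W/(m+r)$ with $W=\sum_j a_j$, pair it with the trivial bound $\sum_j\phi_{a_j}\le 1$ from efficiency and non-negativity, and verify $\min\{1,\,W/(m+r)\}\le 2W/(m+W)$ by a two-case comparison. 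Your case analysis checks out: if $W\ge m+r$ then $W\ge m$, so $2W/(m+W)\ge 1$; if $W< m+r$ then $W\le m+2r$, which is equivalent (for $W>0$) to $W/(m+r)\le 2W/(m+W)$; and the degenerate $r=0$ game is rightly excluded. What the paper's longer induction buys is the explicit lower bound on the small player's power (with the $\min(i,T)$ truncation), a quantitative intermediate statement of some independent interest; what your argument buys is brevity, transparency about why the bound holds (it is really the individual bound of Theorem~\ref{lemmaBasicUpperbound} plus bookkeeping), and the avoidance of the technical appendix lemmas entirely.
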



To prove Theorem~\ref{fairnessTheorem}, we first show a recursive relation for the Shapley-Shubik index. 

\begin{restatable}{lemma}{recursionLemma}
\label{recursionLemma}
The following recursion holds for $\phi_1$

\[
\begin{split}
 & \phi_1(A,m,T) = \begin{dcases}
  \frac{1}{m+r} &  T = 1 \\
 \frac{1}{m+r} \bigg(\sum\limits_{\substack{1\leq i \leq r \\ a_i < T}} \phi_1(A\!\setminus\!\{a_i\},m,T-a_i) + (m\!-\!1) \phi_1(A,m\!-\!1,T-1)\bigg) & T > 1
 \end{dcases}
\end{split}
\]

\end{restatable}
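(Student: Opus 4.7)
The plan is to fix an arbitrary reference small player, call it $1_1$ (by the symmetry property of the Shapley--Shubik index $\phi_1$ equals the index of any small player), and compute $\phi_1(A,m,T)=\Pr_{\sigma}[\mathbbm{1}_{1_1,\sigma}=1]$ by conditioning on the identity of the first player in the uniformly random permutation $\sigma$ of the $m+r$ players. Each of the $m+r$ players appears first with probability $\tfrac{1}{m+r}$, and conditional on the first player, the remaining $m+r-1$ players are in a uniformly random order, which is exactly the distribution used by the Shapley--Shubik index of a reduced WVG.

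For the base case $T=1$, any singleton has value $1$, so the pivotal player is always the one appearing first in $\sigma$; thus $1_1$ is pivotal exactly when it is first, giving $\tfrac{1}{m+r}$. For $T>1$ I would carry out the case analysis on the first player: \emph{(i)} if the first player is $1_1$ itself, it contributes weight $1<T$ so $1_1$ is not yet sufficient and certainly not pivotal after itself, contributing $0$; \emph{(ii)} if the first player is some big player $a_i$ with $a_i\geq T$, then that first player is itself the (unique) pivotal player, so $1_1$ contributes $0$; \emph{(iii)} if the first player is a big player $a_i$ with $a_i<T$, the remaining coalition still needs weight $T-a_i$ from the remaining multiset $A\setminus\{a_i\}$ together with the $m$ small players, and $1_1$ is pivotal in $\sigma$ iff it is pivotal in the resulting uniformly random permutation of the remaining players, which is $\phi_1(A\setminus\{a_i\},m,T-a_i)$; \emph{(iv)} if the first player is any of the $m-1$ small players other than $1_1$, the same reduction yields $\phi_1(A,m-1,T-1)$. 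Summing $\tfrac{1}{m+r}$ times the conditional probability over all $m+r$ choices of first player produces exactly the claimed recursion.

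The main obstacle, and the step that requires the most care, is justifying the reduction in cases \emph{(iii)} and \emph{(iv)}: namely, that conditional on the first player being $a_i<T$ (resp.\ a small player $\neq 1_1$), the event ``$1_1$ is pivotal in $\sigma$'' has the same probability as ``$1_1$ is pivotal'' in the genuine WVG $(A\setminus\{a_i\},m,T-a_i)$ (resp.\ $(A,m-1,T-1)$) under its own uniform permutation. I would argue this by unpacking the pivotality condition $v(\bar\sigma|_{1_1})-v(\sigma|_{1_1})=1$: once the first coordinate is fixed, the weight contributed by the players strictly preceding $1_1$ in the remaining permutation must lie in $[T-a_i-1,\,T-a_i-1]$ (equivalently, just below the reduced threshold before $1_1$ is added and just at or above it afterwards), which is identical to the pivotality condition for $1_1$ in the reduced game, and the remaining permutation is uniform over the reduced player set. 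A minor bookkeeping point is that big players with $a_i\geq T$ drop out of the sum because they contribute $0$, which matches the restriction $a_i<T$ in the recursion; the parameter ranges ($T-a_i\geq 1$, $T-1\geq 1$, and the upper bound on $T$) are straightforwardly inherited from the hypothesis $T>1$ and $T\leq m+\sum a_j$.
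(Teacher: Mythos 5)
Your proposal is correct and follows essentially the same argument as the paper: fix a reference small player, condition on the identity of the first player in the uniform random permutation, and reduce each case to the Shapley--Shubik index of the appropriately shrunken game. The extra care you take in justifying the reduction (uniformity of the remaining permutation and the matching pivotality condition) is sound and only makes explicit what the paper leaves implicit.
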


\begin{proof}
Fix a small player for which we measure the expected number of permutations where it is pivotal. The recursion is done by the conditional expectation on the identity of the first player in the permutation. If $T=1$, then if and only if our fixed player is first, is it pivotal, which happens in probability $\frac{1}{m+r}$. For $T>1$, if the first player in the permutation is $a_i$ and $a_i \geq T$, then our fixed player is not pivotal. If it is some big player with $a_i < T$, then the problem reduces to the WVG with parameters $(A \setminus \{a_i\}, m, T - a_i)$. Similarly, if a small player different than our fixed player is first (for which there are $m-1$ alternatives), the problem reduces to the WVG with parameters $(A, m-1, T-1)$. If our fixed player is first, it is not pivotal. All the above events are with a uniform probability of $\frac{1}{m+r}$. 
\end{proof}

We now show a sufficient condition to prove Theorem~\ref{fairnessTheorem}. Let $\displaystyle r_j = \left|\{i\}_{1\leq i \leq r, a_i = j}\right|$, the amount of big players with weight equal to $j$, and let $a = \max A$. 
We have by definition 
\begin{equation}
\label{defHistograms}
\begin{split}
    & r = \sum\limits_{j=2}^{a} r_j, \quad\quad \sum\limits_{j=1}^r a_j = \sum\limits_{j=2}^a r_j \cdot j.
    \end{split}
\end{equation}

\begin{restatable}{lemma}{shapleyThmInductionStatement}
\label{shapleyThmInductionStatement}
$\bar{R}_{\phi} = 2$, if for any WVG parameters $(A,m,T)$ it holds that $$\displaystyle \phi_1(A,m,T) \geq \frac{m-\mathlarger{\sum}\limits_{i=2}^{a} r_i \min(i, T)}{m\left(m+\mathlarger{\sum}\limits_{i=2}^{a} r_i \min(i, T)\right)}.$$ 
\end{restatable}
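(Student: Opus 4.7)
The plan is to turn the hypothesis, a lower bound on $\phi_1$, into an upper bound on the aggregate big-player power via efficiency, and then compare that bound directly to the proportional value $P(A,m,T)$. Let $S = \sum_{j=1}^r a_j$ and $S' = \sum_{i=2}^a r_i \min(i,T)$. By efficiency, $\sum_{j=1}^r \phi_{a_j}(A,m,T) = 1 - m\phi_1(A,m,T)$, so the target inequality $\frac{\sum_j \phi_{a_j}}{P(A,m,T)} \leq 2$ is algebraically equivalent to $\phi_1(A,m,T) \geq \frac{m - S}{m(m+S)}$. Thus the task reduces to deducing this inequality from the assumed bound $\phi_1(A,m,T) \geq \frac{m - S'}{m(m+S')}$.

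Next, I would observe the elementary but crucial comparison between $S$ and $S'$. By \eqref{defHistograms} we have $S = \sum_{i=2}^a r_i \cdot i$, and trivially $\min(i, T) \leq i$, whence $S' \leq S$. So it suffices to show that the function $f(x) = \frac{m - x}{m(m+x)}$ is monotonically non-increasing in $x \geq 0$. This is a short calculus check: $f'(x) = \frac{-2}{(m+x)^2} < 0$. Hence $f(S') \geq f(S)$, which together with the hypothesis gives $\phi_1(A,m,T) \geq f(S') \geq f(S) = \frac{m - S}{m(m+S)}$, as required. Note this argument is also consistent when $m < S$: then $f(S) < 0$ and the conclusion holds trivially by non-negativity of $\phi_1$.

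This establishes $\bar{R}_\phi \leq 2$. Combining with Example~\ref{lowerBoundShapley}, which already exhibits a sequence of games with ratio tending to $2$, yields $\bar{R}_\phi = 2$.

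As for the main obstacle, the reduction itself is essentially bookkeeping: a one-line use of efficiency, a one-line monotonicity check, and a one-line use of $\min(i,T) \leq i$. The genuine difficulty is of course shifted to proving the hypothesis itself, i.e., the pointwise inequality $\phi_1(A,m,T) \geq \frac{m - S'}{m(m+S')}$ for every WVG instance; that is presumably where an induction using the recursion of Lemma~\ref{recursionLemma} must be deployed, and where the curious appearance of $\min(i,T)$ (rather than $i$) in the bound becomes necessary to make the induction step go through, since the recursion on $T$ can produce sub-instances with small residual thresholds. But inside this lemma, none of that work is needed: the content is exactly the three-step reduction above.
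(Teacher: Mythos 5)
Your proposal is correct and follows essentially the same route as the paper: it chains the hypothesis with $\min(i,T)\leq i$ and the monotonicity of $x\mapsto \frac{m-x}{m(m+x)}$ to get $\phi_1\geq \frac{m-\sum_j a_j}{m(m+\sum_j a_j)}$, then applies efficiency plus symmetry and invokes Example~\ref{lowerBoundShapley} for tightness. The only cosmetic difference is that you make the monotonicity check explicit via the derivative, where the paper states the corresponding inequality directly.
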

\begin{proof}

By Example~\ref{lowerBoundShapley}, $\bar{R}_{\phi} \geq 2$. We thus show $\bar{R}_{\phi} \leq 2$. 
By the premise of the lemma, 
\begin{equation}
\label{strongerInduction}
\begin{split}
& \phi_1(A,m,T) \geq \frac{m-\mathlarger{\sum}\limits_{i=2}^{a} r_i \min(i, T)}{m\left(m+\mathlarger{\sum}\limits_{i=2}^{a} r_i \min(i, T)\right)} \geq \\
& \frac{m-\mathlarger{\sum}\limits_{i=2}^{a} r_i \cdot i}{m\left(m+\mathlarger{\sum}\limits_{i=2}^{a} r_i \cdot i\right)} \stackrel{Eq.~\ref{defHistograms}}{=} \frac{m-\mathlarger{\sum}\limits_{j=1}^r a_j}{m\left(m+\mathlarger{\sum}\limits_{j=1}^r a_j\right)}.
\end{split}
\end{equation}

Now,

\begin{equation}
\label{weakInduction}
\begin{split}
&  \sum_{j=1}^r \phi_{a_j}(A,m,T) \stackrel{\text{efficiency + symmetry}}{=} 1 - m \phi_1(A,m,T) \stackrel{Eq.~\ref{strongerInduction}}{\leq } \\
& 1 - \frac{\displaystyle m - \sum\limits_{j=1}^r a_j}{\displaystyle m + \sum\limits_{j=1}^r a_j} =  2\frac{\displaystyle \sum\limits_{j=1}^r a_j}{\displaystyle m + \sum\limits_{j=1}^r a_j} = 2P(A,m,T).
\end{split}
\end{equation}

If we divide both sides by $P(A,m,T)$, and since this holds for any parameters $(A,m,T)$, we get $\bar{R}_{\phi} \leq 2$. 

\end{proof}

We now prove Theorem~\ref{fairnessTheorem} by proving the sufficient condition of Lemma~\ref{shapleyThmInductionStatement} holds by induction. 

\begin{lemma}
\label{shapleyThmInductionProof}
For any WVG $(A,m,T)$, $$\displaystyle \phi_1(A,m,T) \geq \frac{m-\mathlarger{\sum}\limits_{i=2}^{a} r_i \min(i, T)}{m\left(m+\mathlarger{\sum}\limits_{i=2}^{a} r_i \min(i, T)\right)}.$$
\end{lemma}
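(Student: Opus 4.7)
I would prove this by strong induction on $T$, using Lemma~\ref{recursionLemma} as the inductive engine. First, if $S := \sum_{i=2}^{a} r_i \min(i, T) \geq m$, the claimed lower bound is non-positive and follows from non-negativity of the Shapley-Shubik index. In the base case $T = 1$, Lemma~\ref{recursionLemma} gives $\phi_1(A, m, 1) = 1/(m+r)$, and since every big weight $i \geq 2$ satisfies $\min(i, 1) = 1$, we have $S = r$ and the target inequality reduces to $m \geq m - r$, trivially true. So the focus is the non-trivial regime $T \geq 2$ with $S \leq m - 1$.

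For the inductive step, unfold $\phi_1(A, m, T)$ via Lemma~\ref{recursionLemma} as a weighted average over the big-player-removal subgames $(A \setminus \{a_i\}, m, T - a_i)$ for each $a_i < T$ and the small-player-removal subgame $(A, m-1, T-1)$, and apply the inductive hypothesis to each. Writing $F(m, S) := (m - S)/(m(m+S))$, the hypothesis yields lower bounds $F(m, S_i')$ with $S_i' := \sum_{j \neq i} \min(a_j, T - a_i)$, and $F(m-1, S^*)$ with $S^* := S - r_{\geq T}$, where $r_{\geq T}$ is the number of big players of weight at least $T$. The restriction $S \leq m - 1$ guarantees $S_i', S^* \leq m - 1$, so these lower bounds are non-negative and safely combinable. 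Two useful facts further simplify matters: $F(m, \cdot)$ is strictly decreasing in its second argument (its partial derivative in $S$ equals $-2/(m+S)^2$), and $S_i' \leq S - a_i$, since $\min(a_j, T - a_i) \leq \min(a_j, T)$ and $\min(a_i, T) = a_i$ (using $a_i < T$). Thus one may replace $F(m, S_i')$ with the weaker but cleaner $F(m, S - a_i)$ if it helps.

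The remaining task is the algebraic verification of
\[
\frac{1}{m+r}\left[\sum_{i : a_i < T} F(m, S_i') + (m-1)\, F(m-1, S - r_{\geq T})\right] \;\geq\; F(m, S),
\]
which after clearing denominators becomes a polynomial inequality in $m$, $S$, $r$, $r_{\geq T}$, and the weights $\{a_i\}$. This is the main obstacle: the mixture of $F(m, \cdot)$ and $F(m-1, \cdot)$ terms combined with the sum over big players of weight less than $T$ does not factor transparently. As a sanity check, the single-big-player case $A = \{a\}$ with $a < T$ reduces to a concrete inequality whose slack works out to $2a(a-1) \geq 0$ after expansion, so the bound appears to hold with room. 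To close the general case, one would likely split on whether $r_{\geq T} > 0$ and on which $a_j$ exceed $T - a_i$, and retain the exact $S_i'$ rather than the weaker $S - a_i$ if the algebra demands it.
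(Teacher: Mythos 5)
Your overall strategy coincides with the paper's: induction on $T$, base case $T=1$, disposal of the degenerate regime $S\geq m$ via non-negativity, and an inductive step driven by the first-player recursion of Lemma~\ref{recursionLemma}. (The paper adds one preprocessing step you leave implicit in the definition of $S$: capping every weight $a_i>T$ at $T$, which leaves $\phi_1$ unchanged and lets one write $S=\sum_j a_j$.) The genuine gap is exactly where you flag ``the main obstacle'': the inequality
\[
\frac{1}{m+r}\Bigl[\sum_{i:\,a_i<T} F(m,S_i') + (m-1)\,F(m-1,\,S-r_{\geq T})\Bigr] \;\geq\; F(m,S)
\]
is the entire technical content of the lemma, and you do not prove it. The paper devotes Lemma~\ref{shapleyInductionInequalities} together with Appendices~\ref{app:step2} and~\ref{detailedInductionStep2Ext} to precisely this step, via a case analysis on $r_T\in\{0\}$, $\{1\}$, $\{\geq 2\}$ and a long chain of exact rewritings. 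Verifying the single-big-player case $A=\{a\}$ is a sanity check, not a proof; the general case does not close by inspection because the $F(m,\cdot)$ summands and the $F(m-1,\cdot)$ summand must be recombined with only a small amount of slack, and the slack depends on $r_T$.

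A second, concrete concern: your proposed simplification of replacing $F(m,S_i')$ by the ``cleaner'' $F(m,S-a_i)$ discards slack that the paper's argument actually uses. Since the weight-$T$ players satisfy $\min(T,\,T-a_i)\leq T-1$, one in fact has $S_i'\leq S-a_i-r_T$, and the paper's bound on the $i$-th big-player term is the correspondingly stronger $F(m,\,S-a_i-r_T)$. That extra $+r_T$ is what the $r_T=1$ case of Lemma~\ref{shapleyInductionInequalities} plays off against the deficit in the head count $\sum_{i=2}^{T-1}r_i = r-r_T$, and what the $r_T\geq 2$ case converts into the positive correction term that absorbs the negative $-2mr_TT$ term. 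So before committing to the weakened bound you would need to check that it survives when $r_T\geq 1$; as written, your proposal leaves open both which bound to carry through the induction and how to close the resulting polynomial inequality, which is where all the difficulty of this lemma lives.
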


\begin{proof}

We prove the lemma by induction on the threshold value $T$. 
For $T=1$, we have $\phi_1(A,m,T) = \frac{1}{m+r} \geq \frac{m-r}{m(m+r)}$, which satisfies the condition of Eq.~\ref{strongerInduction}, since $\mathlarger{\sum}\limits_{i=2}^{a} r_i \min(i,T) = \mathlarger{\sum}\limits_{i=2}^{a} r_i = r$. 
Now take a threshold value $T$, and assume for all lower threshold values and any $A, m$, the condition holds. We make several justified assumptions:
\begin{itemize}
\item Recall that we denoted $a = \max A$, the largest weight of a big player. We assume w.l.o.g.~$a \leq T$. If it is strictly higher than $T$, consider the game $A'$ where all $a_i > T$ values are set to $T$. Then $\phi_1(A,m,T) = \phi_1(A',m,T)$ because these players are pivotal iff they appear in the permutation without a preceding pivotal player, in both cases. For $A'$, 
$a' = \max A' = T, \displaystyle r'_j = \left|\{i\}_{1\leq i \leq r, a'_i = j}\right|$, and
\[
\begin{split}
    & \sum\limits_{i=2}^{a'} r'_i \min(i,T) = \sum\limits_{i=2}^{T} r_i \min(i,T) + \sum\limits_{i=T+1}^a r_i T = \sum\limits_{i=2}^{a} r_i \min(i,T).
    \end{split}
    \]
Thus, proving the induction assumption for $(A',m,T)$ also proves it for $(A,m,T)$. 
\item We assume $m > \mathlarger{\sum}\limits_{j=1}^r a_j$. Otherwise, we have by the non-negativity property of the Shapley-Shubik index, 
\[
\begin{split}
    & m\phi_1(A,m,T) \geq 0 \geq \frac{m\!-\!\sum\limits_{j=1}^r a_j}{m\!+\!\sum\limits_{j=1}^r a_j} \stackrel{(1)}{=}  \frac{m\!-\!\mathlarger{\sum}\limits_{i=2}^{a} r_i \min(i, T)}{m\!+\!\mathlarger{\sum}\limits_{i=2}^{a} r_i \min(i, T)},
    \end{split}
    \]
where $(1)$ is since $a\leq T$, and therefore for $1\leq i\leq a, \min(i,T) = i$. 
\item We assume $m\geq 2$, as otherwise, by the $m > \mathlarger{\sum}\limits_{j=1}^r a_j$ assumption, $A = \emptyset$ and $\phi_1(A,m,T) = \frac{1}{m}$, which is exactly the required induction hypothesis in this case.
\end{itemize}
We now develop the induction step. We can write
\begin{align*}
\tag{Induction Step}
\label{recursionStep1}
& m \phi_1(A,m,T) \stackrel{Lemma~\ref{recursionLemma}}{=} \\
 & m\!\cdot\!\frac{\sum\limits_{\substack{1\leq k \leq r \\ a_k < T}} \hspace{-2ex} \phi_1(A\!\setminus\!\{a_k\},m,T\!-\!a_k)\!+\!(m\!-\!1) \phi_1(A,m\!-\!1,T\!-\!1)}{m+r} = \\
 & m\!\cdot\!\frac{(m\!-\!1)\overbrace{\phi_1(A, m\!-\!1,T\!-\!1)}^{Exp1}\!+\!\sum\limits_{i=2}^{T-1} r_i  \overbrace{\phi_1(A\!\setminus\!\{i\}, m,T\!-\!i)}^{Exp2}}{m+r} \stackrel{(1)}{\geq} \\
& \frac{1}{m+r} \cdot \bigg(m\cdot \frac{m-1 - \mathlarger{\sum}\limits_{j=2}^T r_j \min(j, T-1)}{m-1+\displaystyle \sum\limits_{j=2}^T r_j\min(j, T-1)} + \\
& \mathlarger{\sum\limits_{i=2}^{T-1}} r_i \!\cdot\!\frac{m\!-\!\big(\hspace{-2ex}\mathlarger{\sum}\limits_{2\leq j\neq i \leq T} \hspace{-2ex}r_j \min(j, T-i)\big)\!-\!(r_i\!-\!1)\min(i,T\!-\!i)}{m\!+\!\big(\hspace{-2ex}\mathlarger{\sum}\limits_{2\leq j\neq i\leq T}\hspace{-2ex} r_j \min(j, T\!-\!i)\big)\!+\!(r_i\!-\!1)\min(i,T\!-\!i)} \bigg) \stackrel{(2)}{\geq} \\
& \frac{m  -  \displaystyle \sum\limits_{j=1}^r a_j }{m + \displaystyle \sum\limits_{j=1}^r a_j} \stackrel{Eq.~\ref{defHistograms}}{=} \frac{m - \mathlarger{\sum}\limits_{j=2}^a r_j \cdot j}{m + \mathlarger{\sum}\limits_{j=2}^a r_j \cdot j} \stackrel{(3)}{=} \frac{m - \mathlarger{\sum}\limits_{j=2}^a r_j \min(j, T)}{m + \mathlarger{\sum}\limits_{j=2}^a r_j \min(j, T)}, 
\end{align*}
where in $(1)$ we replace $Exp1$ and $Exp2$ by the induction hypothesis for lower values of $T$. For $Exp2$, note that in the game $(A\setminus \{i\}, m, T-i)$, for any weight $2\leq j \neq i \leq T$, the number of players with weight $j$ is $r_j$, but for the weight $i$ it is $r_i - 1$. In Lemma~\ref{shapleyInductionInequalities} of the appendix we show $(2)$, and $(3)$ is since $a\leq T$. The RHS is the inductive result required by Eq.~(\ref{strongerInduction}). This completes the induction step. 
\end{proof}

We conclude the section by examining two more illustrative examples, and providing an individual upper bound for a player's Shapley-Shubik index. 

The following example shows that $\underbar{R}_{\phi} = 0$. 


\begin{example} 
\label{reverseRatio}
%
For any $k \geq 2$, choose $A = \{k\}, m = k, T = 2k$. Then
$\phi_{a_1}(A,m,T) = \frac{1}{k+1}$ while $P(A,m,T) = \frac{1}{2}$.
\end{example}


In example~\ref{reverseRatio}, a big player has less power in terms of the Shapley-Shubik index than its proportional weight. In the next example the opposite holds:
\begin{example}
\label{singlePlayerFairnessExample}
For a single player, it may hold that its individual power to proportional weight ratio is unbounded:
Consider $A = \{2,k\}, m = 1, T = k + 3$. Then $\phi_{a_1}(A,m,T) = \frac{1}{3}$ while $ \frac{a_1}{m + \sum\limits_{j=1}^r a_j} = \frac{2}{k+3}$.
\end{example}
Nevertheless, there does exist an upper bound on the Shapley-Shubik index of any individual big player:
\begin{theorem}
\label{lemmaBasicUpperbound}
$\displaystyle \phi_{a_i}(A,m,T) \leq \frac{a_i}{m+r}.$
\end{theorem}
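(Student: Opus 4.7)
The plan is to obtain the bound via a conditional argument: fix the relative order of the other $m+r-1$ players, and show that, given this order, the number of insertion positions at which $a_i$ turns out to be pivotal is at most $a_i$.

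Concretely, I would sample the uniform permutation $\sigma \in S_{m+r}$ in two stages: first draw a uniformly random permutation $\pi$ of the remaining $m+r-1$ players in $(A\setminus\{a_i\}) \cup M$, then insert $a_i$ into a uniformly random slot among the $m+r$ available ones. Writing $W_j$ for the sum of weights of the first $j$ players under $\pi$ (with $W_0=0$ and $W_{m+r-1} = m + \sum_k a_k - a_i$), placing $a_i$ so that exactly $j$ players precede it makes $a_i$ pivotal precisely when $v(\sigma|_{a_i})=0$ and $v(\bar{\sigma}|_{a_i})=1$, i.e.\ when $W_j < T \leq W_j + a_i$. Equivalently, $W_j$ must lie in the window $\{T-a_i, T-a_i+1, \ldots, T-1\}$ of $a_i$ consecutive integers.

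The key combinatorial observation is that, because every player's weight is a positive integer, the sequence $(W_j)_{j=0}^{m+r-1}$ is a strictly increasing sequence of non-negative integers. A strictly increasing integer sequence can meet any set of $a_i$ consecutive integers in at most $a_i$ indices, so at most $a_i$ of the $m+r$ insertion slots make $a_i$ pivotal. Conditional on $\pi$ the insertion slot is uniform on $m+r$ positions, so the conditional pivotal probability is at most $a_i/(m+r)$; averaging over $\pi$ yields $\phi_{a_i}(A,m,T) \leq a_i/(m+r)$.

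I do not anticipate a serious obstacle: the only things to get right are (i) verifying that the two-stage sampling does produce the uniform distribution on $S_{m+r}$ (standard), and (ii) the off-by-one bookkeeping that identifies the pivotal window with exactly $a_i$ integers. Both are elementary, so the whole argument should fit in a short paragraph.
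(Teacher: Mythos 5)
Your proposal is correct and follows essentially the same route as the paper's proof: both decompose the uniform permutation into a uniform permutation of the other $m+r-1$ players plus a uniformly random insertion position for $a_i$, and both bound the number of pivotal positions by observing that the (strictly increasing integer) prefix sums can hit the window $\{T-a_i,\dots,T-1\}$ of $a_i$ consecutive integers at most $a_i$ times.
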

\begin{proof}
Let $S_n$ be the set of all permutations over $n$ items and $\sigma(A)$ for some ordered set $A$ and some $\sigma \in S_n$ is the ordered set after applying $\sigma$. For $j \in \{1,...,|A|\}$ let $\sigma_j(A)$ be the element in the $j$-th position of the ordered set $\sigma(A)$, and $\sigma|_{-j}(A)$ be the ordered set of size $|A| - 1$ obtained by omitting the element in position $j$ of $\sigma(A)$. 

Two ways to think about sampling permutations for the calculation of the Shapley-Shubik index uniformly at random are the following: First, we can take $S_{m+r}$, the set of all permutations that are possible to apply to $A+M$, and sample $\sigma \sim UNI(S_{m+r})$. Equivalently, for a fixed player $a_i$, we can choose its position in the permutation applied to $A+M$, namely $1 \leq j \leq m+r$, uniformly at random, and then sample a permutation over all other players $\sigma' \sim UNI(S_{m+r-1})$. We then denote $\sigma^*_{a_i}(j, \sigma')$ as the unique permutation that has $\sigma_{a_i}^*|_j(A+M) = a_i$ and $\sigma_{a_i}^*|_{-j}(A+M) = \sigma'(A \setminus \{a_i\} + M)$. Also recall $[m+r]$ is a short-hand notation for the set $\{1,...,m+r\}$. 
This enables rewriting $\phi_{a_i}(A,m,T)$ in terms of total expectation:
\[
\begin{split}
    & \phi_{a_i}(A,m,T) = \mathbb{E}_{\sigma \sim UNI(S_{m+r})}[\mathbbm{1}_{a_i,\sigma}] = \\
    & \mathbb{E}_{\sigma' \sim UNI(S_{m+r-1})}\left[\mathbb{E}_{j \sim UNI([m+r])}[\mathbbm{1}_{a_i,\sigma^*_{a_i}(j, \sigma')} \mid \sigma' ] \right] \stackrel{(1)}{\leq} \\
    & \mathbb{E}_{\sigma' \sim UNI(S_{m+r-1})}\left[\frac{a_i}{m+r}\right] = \frac{a_i}{m+r}\ \ .
    \end{split}
    \]
\noindent
The transition in $(1)$ is by the following argument. Let $\nu_j(\sigma')$ be the sum of the first $j$ elements in $\sigma'$. By definition,
$a_i$ is pivotal in $\sigma^*_{a_i}(j, \sigma')$ iff $\nu_j(\sigma') <T$ and $\nu_j(\sigma') + a_i \geq T$. Each choice of $j$ determines a unique integer value for $\nu_j(\sigma')$. I.e., there are overall $a_i$ integer values that have the property that their sum is smaller than $T$ and with the addition of $a_i$ their sum is at least $T$, namely $T-1,...,T-a_i$.
There are at most $a_i$ unique values of $j$ (and, as a result, of $\nu_j(\sigma')$)  that would result in $j$ being pivotal for $\sigma^*_{a_i}(j, \sigma')$. This is out of the total of $m+r$ possible values for $j$, uniformly chosen. So for any $\sigma'$ the inner expectation is bounded by $\frac{a_i}{m+r}$. 
\end{proof}

\section{The Banzhaf Index}
\label{powerVsProportionBanzhaf}

We show a contrary result for the Banzhaf index \cite{banzhaf1964weighted}, where an asymptotic example has an unbounded ratio of aggregate big players' power over their proportion.

\begin{definition} (See \citeR{dubey1979mathematical})
Let $P(S)$ be the power set of $S$, and $UNI$ be the uniform distribution over a discrete set. The Absolute Banzhaf index of a WVG is:
\[
\begin{split}
    & \beta'_{a_i}(A,m,T) = \mathbb{E}_{S \sim UNI(P(A\setminus \{a_i\}\cup M)}\left[v(S\cup \{a_i\}) - v(S)\right] \\
    & \beta'_1(A,m,T) = \mathbb{E}_{S \sim UNI(P(A\cup M\setminus \{1\})}\left[v(S\cup \{1\}) - v(S)\right].
    \end{split}
\]
The Normalized Banzhaf index is:
$$ \beta_{a_i} = \frac{\beta'_{a_i}(A,m,T)}{\displaystyle \sum\limits_{j=1}^r \beta'_{a_j} + m\beta'_1(A,m,T)}.$$
\end{definition}
While the Shapley-Shubik index gives equal probabilities to all permutations over players, the Banzhaf index gives equal probabilities to all subsets of players. The normalization is needed to achieve the efficiency property, where summation over the indices of all players sums up to exactly one. The absolute Banzhaf indices may sum to less or more than 1. For an individual absolute Banzhaf index, by definition 
\begin{equation}
\label{eqBanzhafIndividualUpperBound}
    \beta'_{a_i}(A,m,T) \leq 1
\end{equation}

\begin{definition}
The Banzhaf-proportional ratios are the global supremum over all WVGs
\[
\begin{split}
& \bar{R}_{\beta'} = \sup_{A,m,T} \frac{\displaystyle \sum\limits_{i=1}^r \beta'_{a_i}(A,m,T)}{P(A,m,T)}, \\
\end{split}
\]
A similar definition holds for $\bar{R}_{\beta}$.
\end{definition}

While the power of the big players cannot be much larger than their proportional weight according to the Shapley-Shubik index, the Banzhaf index gives a different result:

\begin{theorem}
\label{banzhafUnbounded}
$\bar{R}_{\beta'}, \bar{R}_{\beta}$ are unbounded. 
\end{theorem}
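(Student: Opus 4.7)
The plan is to exhibit a single family of weighted voting games $(A_n,m_n,T_n)_{n\ge 1}$ for which both $\sum_{i}\beta'_{a_i}/P$ and $\sum_{i}\beta_{a_i}/P$ tend to infinity. The construction will use one big player whose weight $k_n$ is intermediate, $\sqrt{n}\ll k_n\ll n$, together with a threshold placed so that the big player's pivotal window $[T_n-k_n,T_n-1]$ is symmetric about $n/2$, the mean of the subset-size distribution of the $n$ small voters. The weight is chosen large enough for Chernoff concentration to make the big player pivotal almost surely, while still being small enough that the proportion $P$ vanishes.

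Concretely I would take $k_n:=\lceil n^{3/4}\rceil$, $A_n:=\{k_n\}$, $m_n:=n$, and $T_n:=\lfloor n/2\rfloor+\lceil k_n/2\rceil+1$. The rest consists of two standard binomial estimates. First, Chernoff's inequality applied to $\mathrm{Bin}(n,\tfrac12)$ yields $\beta'_{a_1}=\Pr[\mathrm{Bin}(n,\tfrac12)\in[T_n-k_n,T_n-1]]\ge 1-2\exp(-k_n^2/(2n))\to 1$. Second, the two point probabilities that constitute $\beta'_1$ are evaluated at integers at distance $\Theta(k_n)$ from the binomial mean, and Stirling's formula (the local de Moivre--Laplace theorem) bounds each of them by $O\bigl(n^{-1/2}\exp(-k_n^2/(2n))\bigr)$, so $m_n\beta'_1=O\bigl(\sqrt{n}\,\exp(-k_n^2/(2n))\bigr)\to 0$.

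Combining these, $\beta_{a_1}=\beta'_{a_1}/(\beta'_{a_1}+m_n\beta'_1)=1-o(1)$, while $P(A_n,m_n,T_n)=k_n/(k_n+n)=\Theta(n^{-1/4})\to 0$. Hence $\sum_{i}\beta_{a_i}/P=\Omega(n^{1/4})\to\infty$, which shows $\bar{R}_{\beta}$ is unbounded; the same two estimates also give $\sum_{i}\beta'_{a_i}/P=\beta'_{a_1}/P=\Omega(n^{1/4})$, so $\bar{R}_{\beta'}$ is unbounded as well.

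The main obstacle is the upper bound on $\beta'_1$: I need the pointwise probability of $\mathrm{Bin}(n-1,\tfrac12)$ at a value $\Theta(k_n)$ away from its mean to be not merely the peak order $\Theta(n^{-1/2})$ but to include the full sub-Gaussian factor $\exp(-k_n^2/(2n))$. This is precisely why $k_n$ must grow faster than $\sqrt{n}$; any slower growth would make $m_n\beta'_1$ dominate the normalization and collapse $\beta_{a_1}$ back toward proportion, which is the reason the example for $\bar{R}_{\beta}$ must be more delicate than the simpler construction (e.g.\ $A=\{2\}$ with $T\approx m/2$) that already suffices to blow up $\bar{R}_{\beta'}$.
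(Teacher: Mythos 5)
Your proposal is correct and follows essentially the same route as the paper's proof: a single big player of intermediate weight ($\sqrt{m}\ll k\ll m$; the paper takes $m=k^{1.5}$ and weight $2k$, you take $m=n$ and weight $n^{3/4}$), a threshold centering the pivotal window at $m/2$, a Chernoff bound showing $\beta'_{a_1}\to 1$, and a bound on the two point probabilities making up $\beta'_1$ so that $m\beta'_1\to 0$ and the normalized index also tends to $1$ while $P\to 0$. The only cosmetic difference is that you bound the small player's point probabilities via Stirling/local-limit estimates, whereas the paper combines them by Pascal's identity into a single binomial coefficient and dominates it by the same Chernoff tail --- which also shows the local $n^{-1/2}$ factor you flag as the ``main obstacle'' is not actually needed.
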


The proof of Theorem~\ref{banzhafUnbounded}, given in Appendix~\ref{banzhafUnbounded}, shows that the ratio in the following example goes to infinity with $k$.

\begin{example}
\label{example_for_unbounded_Banzhaf_ratio}
Consider $A = \{2k\}, m = k^{1.5}, T = \frac{2k + k^{1.5}}{2}$. Then for $k = 1600$, we have
$$
    P(A,m,T)\!=\!\frac{3200}{3200\!+\!1600^{1.5}}\!=\!\frac{1}{21},
    \beta_{a_1}(A,m,T)\!\approx\! \beta'_{a_1}(A,m,T)\!\approx\! 
    1. \\
$$
\end{example}

The intuition for the calculation is as follows. Since there is only one big player, and all other players have a weight of one, only the size of the subset of other players matters for the index. Choosing a subset $S$ of small identical players with uniform probability is like letting each small player participate in the chosen subset with probability $\frac{1}{2}$ (a Bernoulli trial). So, the size of the subset is sampled from the Binomial distribution with parameters $B(m, \frac{1}{2})$. 
Measure concentration properties of the symmetric binomial distribution around its mean imply that with high probability the size of the sampled set is close enough to $\frac{m}{2}$ so that the big player in the example is pivotal. Details of this calculation can be directly extracted from the argument in the proof of Theorem~\ref{banzhafUnbounded}.
The normalized Banzhaf index is for these parameters is:
        \[
        \begin{split}
    & \beta_{a_1}(A,m,T)\!\approx\!
    \frac{0.999999}{0.999999\!+\!1600^{1.5}\cdot 3.52795*10^{-33}}\!\approx\!1.
        \end{split}
        \]
        
    This yields $21$ for both the ratio of absolute Banzhaf to proportion and normalized Banzhaf to proportion. Notice that this example also serves as a counter-example for a possible adaptation of Theorem~\ref{lemmaBasicUpperbound} to the Banzhaf index, as we have $\beta_{a_1}(A,m,T) \approx 1 > \frac{3200}{64001} = \frac{a_1}{m+r}$. 



Example~\ref{example_for_unbounded_Banzhaf_ratio} is in the spirit of Section b of \citeA{penrose1946elementary} and the asymptotic results analysed in Section 7 of \citeA{dubey1979mathematical}. As far as we know, the exact bound that we derive along with its formal analysis are new.

Theorem 7 of \citeA{aziz2011false} states that if a single player splits her votes between exactly two identities, its power as measured by the Banzhaf index cannot increase by a factor larger than 2. In contrast, Theorem~\ref{banzhafUnbounded} above shows that, with general splits, a player might end up {\em decreasing} its power by an unbounded factor. This paints an overall non-favorable picture for false-name manipulations as measured by the Banzhaf index. 
%

\section{The Deegan-Packel Index}
\label{powerVsProportionDP}

\begin{definition}
An all-pivotal set in a WVG is a set $S$ such that for any player $s\in S$, $v(S) - v(S\setminus s) = 1$. Let the set of all-pivotal subsets be $AP$.
The Deegan-Packel index is:
$$\rho_{a_i}(A,m,T) = \mathbb{E}_{S \sim UNI(AP)}\left[\frac{\mathbbm{1}_{a_i \in S}}{|S|}\right],$$
where $UNI$ is the uniform disribution over a discrete set. 
\end{definition}

Thus, the Deegan-Packel index is similar to the Banzhaf index but takes into account only the all-pivotal subsets that are, in a sense, the minimal coalitions. It also considers the size of the coalition, so participation in a large coalition results in less ``power'' than being a part of a small coalition. One can verify that this index is efficient, i.e., the sum of indices of all players is always exactly one.

\begin{definition}
The Deegan-Packel-proportional ratio is the global supremum over all WVGs
\[
\begin{split}
& \bar{R}_{\rho} = \displaystyle \sup_{A,m,T} \frac{\sum\limits_{i=1}^r \rho_{a_i}(A,m,T)}{P(A,m,T)}. 
\end{split}
\]

\end{definition}

\begin{example}
$\bar{R}_{\rho} \geq 2$:
Let $\displaystyle A = \{k\}, m = k-1, T = k$.
Then
$\displaystyle \rho_{a_1}(A,m,T) = 1$, while $P(A,m,T) = \frac{k}{2k-1}$. 
\end{example}

\begin{theorem}
\label{thm:DPpropRatio}
$\bar{R}_{\rho} \leq 3$. 
\end{theorem}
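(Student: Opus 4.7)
Let $B = \sum_{j=1}^r a_j$, so that $P = B/(m+B)$. I would start by disposing of the easy regime: if $m \leq 2B$, then $3P = 3B/(m+B) \geq 1 \geq \sum_{i=1}^r \rho_{a_i}$, where the last inequality uses non-negativity of $\rho_p$ together with the efficiency identity $\sum_p \rho_p = 1$. Thus the theorem is trivial when $m \leq 2B$, and the work is in the regime $m > 2B$.

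The heart of the argument I have in mind is a \emph{pointwise} inequality, valid for every $S \in AP$:
\[
\frac{|S \cap A|}{|S|} \leq \frac{w(S \cap A)}{T}.
\]
I would prove this by case analysis. If $S \subseteq A$, then $w(S \cap A) = w(S) \geq T$ so the RHS is $\geq 1$. Otherwise $S$ contains a small player; all-pivotality then forces $w(S \cap A) + |S \cap M| = T$ exactly (removing a small player of weight 1 must drop the total below $T$, while $S$ itself wins). Substituting $|S| = |S \cap A| + T - w(S \cap A)$ and writing $k = |S \cap A|$, $w = w(S \cap A)$, the inequality reduces algebraically to $(k - w)(T - w) \leq 0$; this holds since $w \geq k$ (big-player weights are positive integers) and $w \leq T - 1 < T$ (at least one small player occupies a unit of the threshold). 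Averaging over $S \sim \mathrm{UNI}(AP)$ immediately yields the reduction
\[
\sum_{i=1}^{r} \rho_{a_i} \leq \frac{\mathbb{E}_{S}\bigl[w(S \cap A)\bigr]}{T}.
\]

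It then remains, in the regime $m > 2B$, to show $\mathbb{E}_{S}[w(S \cap A)]/T \leq 3B/(m+B)$. Writing $\mathbb{E}_{S}[w(S \cap A)] = \sum_i a_i \Pr[a_i \in S]$, it suffices to bound $\Pr[a_i \in S] \leq 3T/(m+B)$ for each big player $a_i$. I would attack this via a swap/injection argument: given an all-pivotal $S$ in Case~2 with $a_i \in S$, replacing $a_i$ by any $a_i$ small players drawn from $M \setminus S$ keeps the total weight at exactly $T$ and preserves pivotality of every remaining member, producing $\binom{m - |S \cap M|}{a_i}$ sister all-pivotal sets that do not contain $a_i$. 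Combined with the standard identity $\binom{m}{u}\binom{m-u}{a_i} = \binom{m}{u+a_i}\binom{u+a_i}{a_i}$, this yields an explicit ratio between $\sum_{A' \ni a_i} \binom{m}{T-w(A')}$ and $\sum_{A'' \not\ni a_i} \binom{m}{T-w(A'')}$, giving the desired per-player probability bound. Case~1 sets (of which there are at most $2^r$) must be handled separately; they are dwarfed by $|AP| \geq \binom{m}{T}$ in the regime $m > 2B$, which is where the factor of 3 absorbs them.

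The main obstacle is the final combinatorial bookkeeping of the swap argument --- in particular, verifying that the swap is always executable (one needs $m - |S \cap M| \geq a_i$ unused small players, which can fail when $T$ is close to $m$ and must be handled with a small WLOG reduction like $a \leq T$) and that the Case~1 contribution indeed fits inside the gap between the tight bound 2 (from the example) and the weaker provable bound 3. I expect this slack is exactly what allows the clean constant 3 while keeping the proof elementary.
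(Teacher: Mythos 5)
Your argument is correct in outline but takes a genuinely different route from the paper's. The paper splits on the threshold: a first lemma shows that a ratio above $3$ forces $T<\frac{m}{2}$ and $\sum_j a_j<m$, using only the crude bound $\frac{|S\cap A|}{|S|}\le\frac{r}{l+r}$ where $l=(T-\sum_j a_j)^+$ is the minimum number of small players in any all-pivotal set; a second lemma then proves the ratio is at most $2$ in that low-threshold regime by an aggregate binomial computation organized over the size-count tuples $(i_2,\dots,i_a)$ of all-pivotal sets. Your pointwise inequality $\frac{|S\cap A|}{|S|}\le\frac{w(S\cap A)}{T}$ has no counterpart in the paper and is correct as you argue (all-pivotality of a set with a weight-$1$ member forces $w(S)=T$ exactly, after which the claim is $(k-w)(T-w)\le 0$). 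It replaces the paper's count-based bound by a weight-based one and reduces the whole theorem to the per-player statement $\Pr[a_i\in S]\le \frac{3T}{m+\sum_j a_j}$, which is conceptually cleaner than the paper's claim-level algebra; what the paper's version buys in exchange is the sharper constant $2$ in the low-threshold regime.

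The bookkeeping you flag as the main obstacle does close, and more easily than you fear. The per-player bound is vacuous unless $3T<m+\sum_j a_j$, which combined with $m>2\sum_j a_j$ forces $T<\frac{m}{2}$; so you may assume this throughout. Then every all-pivotal set containing a small player satisfies $|S\cap M|=T-w(S\cap A)<\frac{m}{2}$, so the swap is always executable, and for a big part $A'\ni a_i$ with $w(A')<T$ the ratio $\binom{m}{T-w(A')}\big/\binom{m}{T-w(A')+a_i}$ is a product of factors each at most $1$, the largest being at most $\frac{T}{m-T+1}\le\frac{3T}{m+\sum_j a_j}$. The all-big all-pivotal sets containing $a_i$ (your Case 1) need not be handled by a volume comparison against $\binom{m}{T}$: fold them into the same injection by mapping $S\mapsto (S\setminus\{a_i\})$ together with $T-w(S)+a_i$ small players (a number between $1$ and $a_i$, since all-pivotality gives $T\le w(S)<T+a_i$), which contributes a ratio of at most $\frac{1}{m}\le\frac{3T}{m+\sum_j a_j}$. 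Since distinct big parts $A'\ni a_i$ map to distinct big parts $A'\setminus\{a_i\}$, the image families are disjoint subsets of $AP$, and summing gives $\Pr[a_i\in S]\le\frac{3T}{m+\sum_j a_j}$ as required.
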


The theorem follows from lemmas given in Appendices~\ref{appendix:DPFirstPartProof},\ref{appendix:DPsecondPartProof}. It exploits properties of the all-pivotal coalition in two threshold regimes: If the threshold is high, we show that enough small players must participate in an all-pivotal coalition, making the relative power of the big players in such a coalition small. If the threshold is low, we are able to use algebraic manipulations over binomials to derive the bound. 

Regarding the power of each individual player, the following example shows that Theorem~\ref{lemmaBasicUpperbound} can not be adapted for the Deegan-Packel index. 

\begin{example}
\label{ex:DPnoBasicLowerBound}
There exists $\{A,m,T\}$ such that $\rho_{a_1}(A,m,T) > \frac{a_1}{m+r}$. 

\noindent Consider $A = \{99, \overbrace{101, ..., 101}^{1000}\}, m = 100, T = 101$. Any all-pivotal coalition $S$ with $a_i \in S$ for $i\geq 2$ is a singleton. Any coalition $S$ with $\forall i\geq 2, a_i \not \in S$ must include $a_1$, otherwise the coalition does not reach the threshold $T$. For a coalition $S$ with $a_1 \in S$, $\forall i\geq 2, a_i \not \in S$ to be all-pivotal it must include exactly two small players. There are thus $\binom{m}{2} = 4950$ all-pivotal coalitions that do not include any $a_i$ with $i\geq 2$, and $5950$ all-pivotal coalitions overall. We conclude that $\rho_{a_1}(A,m,T) = \frac{4950}{5950} \frac{1}{3} > \frac{99}{1101}= \frac{a_1}{m+r}$. This example can be generalized with $A = \{ k - 1, \overbrace{k+1,...,k+1}^{k^{1.5}}\}, m = k, T = k+1$, and yield $\rho_{a_1}(A,m,T) = \frac{\binom{k}{2}}{\binom{k}{2} + k^{1.5}} \frac{1}{3} \approx \frac{1}{3}$, $\frac{a_1}{m+r} = \frac{k-1}{k^{1.5} + k + 1} \approx \frac{1}{\sqrt{k}}$ for large enough values of $k$. Thus, not only is $\rho_{a_1}(A,m,T)$ not upper bounded by $\frac{a_1}{m+r}$, but it can supersede it by an arbitrary multiple. 
\end{example}

\section{A Power Index False-name Game}
\label{FalseNameGameSection}
In this section, we consider a framework general to all power indices under the possibility of votes split by big players in the voting game. Our discussion focuses on the Shapley-Shubik power index, where we give a conjecture with empirical results. 
%
We begin with a notation. Given a natural number $a$, define the integer partitions of $a$ as
$$Partitions(a) = \bigcup_{i=1}^{a} \left\{ \{b_1,...,b_i\} \bigg| \sum\limits_{j=1}^i b_j = a, \forall_{j=1}^i b_j \in \mathbb{N} \right\}$$ In words, the partitions of $a$ are all the different multi-sets of natural numbers such that their sum is $a$. Note that we allow several big players to split into multiple identities each, which is stronger than many other incentive analyses of false-name attack where only one strategic player is considered.

\begin{definition}[The false-name weighted voting game for a power index $\alpha \in \{\phi, \beta, \rho\} $]
Let $\{A,m,T\}$ be a WVG. We define a non-cooperative game with $|A|$ strategic players (which are the big players in the WVG). The strategy space of each strategic player is $Partitions(a_i)$. 
Given strategies $s_i = \{b_i^1,...,b_i^{c_i}\}$ for $1 \leq i \leq r$, let $B=\{b_1^1,...,b_1^{c_1},...,b_r^1,...,b_r^{c_r}\}$.
The payoff for player $i$ is
$$u_i^{\alpha}(s_1,...,s_r) = \sum\limits_{j=1}^{c_i} \alpha_{b_i^j}(B,m,T).$$
\noindent
Let $c=\sum\limits_{i=1}^r c_i$ stand for the total number of elements in $B$.
\end{definition}

We wish to understand how the option to split (submit false-name bids) changes the power of the strategic players. The previous section sheds light on this question:

\begin{theorem}
\label{ShapleyGameTheorem}
When $\alpha \in \{\phi, \rho\}$ (the Shapley-Shubik and Deegan-Packel indices), then for any tuple of strategies $s_1,...,s_r$,
$$
\sum\limits_{i=1}^r u_i^{\alpha}(s_1,...,s_r) \leq
\bar{R}_{\alpha} P(A,m,T).$$
In particular, this happens in any mixed or pure Nash or correlated equilibrium of the game.

\end{theorem}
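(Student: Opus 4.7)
The plan is to reduce the post-split game to a standard WVG in which the weight-$1$ splits are absorbed into the pool of small players, apply the $\bar R_\alpha$ bound to that auxiliary game, and then close the gap using the efficiency property of $\alpha$.

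Given a strategy profile $(s_1,\ldots,s_r)$, decompose the resulting multiset as $B = B_{\geq 2}\cup B_1$, where $B_1$ collects the parts of weight $1$ and $B_{\geq 2}$ the rest. Define the auxiliary WVG $\tilde W := (B_{\geq 2},\, m+|B_1|,\, T)$. It has exactly the same underlying multiset of player weights as $(B,m,T)$, just with the weight-$1$ splits reclassified as small. By symmetry of $\alpha$, every weight-$1$ identity (whether originally in $M$ or produced by a split) has the same power $\alpha_1(\tilde W)$ in $\tilde W$, so
$$\sum_{i=1}^r u_i^\alpha(s_1,\ldots,s_r) \;=\; X \;+\; |B_1|\,\alpha_1(\tilde W), \qquad X := \sum_{b\in B_{\geq 2}} \alpha_b(\tilde W).$$

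Because every weight in $B_{\geq 2}$ is at least $2$, $\tilde W$ is a ``standard'' WVG and the global bound applies: $X \leq \bar R_\alpha\, P(\tilde W)$. Since splits conserve total weight, $\sum_{b\in B} b = \sum_i a_i$, whence $P(\tilde W) = (\sum_i a_i - |B_1|)/(m + \sum_i a_i)$. Efficiency of $\alpha$ provides $X + (m+|B_1|)\,\alpha_1(\tilde W) = 1$, which lets me eliminate $\alpha_1(\tilde W)$ and rewrite $\sum_i u_i^\alpha = (mX + |B_1|)/(m+|B_1|)$.

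From here the target inequality $\sum_i u_i^\alpha \leq \bar R_\alpha\, P(A,m,T)$, after cross-multiplying by the positive quantity $(m+|B_1|)(m+\sum_i a_i)$ and substituting the upper bound on $X$, collapses to $|B_1|(m+\sum_i a_i)(\bar R_\alpha - 1) \geq 0$. Thus it suffices that $\bar R_\alpha \geq 1$, which holds for both indices of interest: $\bar R_\phi = 2$ and $\bar R_\rho \geq 2$ (more generally any WVG with $m=0$ gives ratio exactly $1$ by efficiency). The equilibrium statement is automatic because the bound is pointwise in the strategy profile.

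The only subtle point I anticipate is the bookkeeping of weight-$1$ splits: they are indistinguishable from original small players inside $\tilde W$ and so disappear from the $\bar R_\alpha$ bound on $X$, yet they must be credited to the strategic players. The efficiency identity is the lever that recaptures their contribution, and the fact that $\bar R_\alpha \geq 1$ is exactly the slack needed to match the $|B_1|$ units of proportion lost in passing from $P(A,m,T)$ to $P(\tilde W)$.
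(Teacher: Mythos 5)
Your proof is correct, and its core coincides with the paper's: apply the global bound $\bar{R}_{\alpha}$ to the post-split game and use the fact that splits conserve total weight, so that $P(B,m,T)=P(A,m,T)$. The paper's own proof is a one-liner doing exactly this, treating every part $b_i^j$ --- including parts of weight $1$ --- as a ``big player'' of the post-split WVG and invoking $\sum_{b\in B}\alpha_b(B,m,T)\le \bar{R}_{\alpha}P(B,m,T)$ directly. Where you differ is in handling the weight-$1$ parts: you reclassify them as small players, apply $\bar{R}_{\alpha}$ only to the genuinely big remainder $B_{\ge 2}$, and recover the weight-$1$ parts' contribution through efficiency, reducing the target inequality to $|B_1|\,(m+\sum_i a_i)(\bar{R}_{\alpha}-1)\ge 0$. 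This extra bookkeeping is not wasted effort: the theorems establishing $\bar{R}_{\phi}=2$ and $\bar{R}_{\rho}\le 3$ are proved under the convention that big players have weight at least $2$ (the histogram identity $r=\sum_{j\ge 2}r_j$ and the explicit step ``big players are of size at least 2'' in the Deegan--Packel argument), so the paper's direct invocation of $\bar{R}_{\alpha}$ on a multiset containing weight-$1$ ``big'' players is, strictly speaking, outside the scope of what was proved; your version closes that gap at the cost of the easy additional fact $\bar{R}_{\alpha}\ge 1$ (immediate from any $m=0$ instance via efficiency). The algebra checks out, including the edge case $B_{\ge 2}=\emptyset$, where $X=0$ and the aggregate power equals the proportion exactly.
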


Recall that $\bar{R}_{\phi} = 2, \bar{R}_{\rho} \leq 3$ (by Theorems~\ref{fairnessTheorem},\ref{thm:DPpropRatio} respectively). Note that by Theorem~\ref{banzhafUnbounded}, this theorem can not hold with any finite bound for the Banzhaf index. 

\begin{proof}
Given $B_{s_1,...,s_r}$ as the set of big players in the WVG after the application of the splitting strategies, we have 
$$\displaystyle \sum\limits_{i=1}^r \sum\limits_{j=1}^{c_i} \alpha_{b_i^j}(B_{s_1,...,s_r},m,T) \leq \bar{R}_{\alpha} \frac{\sum\limits_{i=1}^r \sum\limits_{j=1}^{c_i} b_i^j}{m +\sum\limits_{i=1}^r\sum\limits_{j=1}^{c_i} b_i^j} = \bar{R}_{\alpha} \frac{\sum\limits_{j=1}^r a_j}{m + \sum\limits_{j=1}^r a_j}.$$
\end{proof}

Thus, strategically splitting vote weights using false-name manipulations cannot increase the overall Shapley-Shubik power of the big players to be more than double their proportional weight. On the other hand, we conjecture that such strategic manipulations cannot harm their overall Shapley-Shubik power too much: 

\begin{conjecture}
\label{ShapleyConjecture}
For the Shapley-Shubik index $\phi$ with any weighted voting game $\{A,m,T\}$ and a choice of strategies resulting in a corresponding weighted voting game $\{B,m,T\}$, it holds that

$\displaystyle \sum\limits_{i=1}^r \phi_{a_i}(A,m,T) \leq 2\sum\limits_{i=1}^r \sum\limits_{j=1}^{c_i}\phi_{b_i^j}(B,m,T).$
\end{conjecture}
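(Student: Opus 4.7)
The plan is to reduce Conjecture~\ref{ShapleyConjecture} to an inequality on the index of a unit-weight player and then bound it by cases, isolating the regime that requires new ideas. In game $(B,m,T)$, the $m$ original small players and the $c_1:=|\{(i,j):b_i^j=1\}|$ weight-1 split identities are symmetric, so they share a common Shapley-Shubik index $\phi_1(B,m,T)$. Using efficiency, $\sum_{b_i^j\geq 2}\phi_{b_i^j}(B,m,T)+(m+c_1)\phi_1(B,m,T)=1$; adding back the $c_1\phi_1$ contribution of the weight-1 split identities to the ``big'' power gives $\sum_{i,j}\phi_{b_i^j}(B,m,T)=1-m\phi_1(B,m,T)$, and analogously $\sum_j\phi_{a_j}(A,m,T)=1-m\phi_1(A,m,T)$. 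Conjecture~\ref{ShapleyConjecture} is therefore equivalent to
\begin{equation}\label{eq:planReformulated}
2m\phi_1(B,m,T)\;\leq\; 1+m\phi_1(A,m,T).
\end{equation}

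Let $c=\sum_i c_i$. When $c\geq m$, the argument of Theorem~\ref{lemmaBasicUpperbound} applied to a weight-1 player in $(B,m,T)$ gives $\phi_1(B,m,T)\leq 1/(m+c)\leq 1/(2m)$, so~(\ref{eq:planReformulated}) follows from $\phi_1(A,m,T)\geq 0$. When $c<m$ but $c\geq W_T(A):=\sum_j\min(a_j,T)$, combining this upper bound with the Lemma~\ref{shapleyThmInductionProof} lower bound $\phi_1(A,m,T)\geq (m-W_T(A))/(m(m+W_T(A)))$ and cross-multiplying reduces~(\ref{eq:planReformulated}) algebraically to the hypothesis $c\geq W_T(A)$.

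The main obstacle is the residual sub-case $c<\min(m,W_T(A))$, where neither plug-in bound is tight (as one can check on the small instance $A=\{5\}$, $m=10$, $T=5$, $B=\{3,2\}$, in which~(\ref{eq:planReformulated}) holds with slack but the generic bounds do not close the gap). The plan here is to strengthen the upper bound on $\phi_1(B,m,T)$ by directly exploiting that $B$ refines $A$, either via a coupling that compares pivotal events in $A\cup M$ and $B\cup M$ by independently ordering the sub-blocs inside each split bloc, or via a potential-function induction on the elementary sequence of single-identity splits $A=B_0\to B_1\to\cdots\to B_k=B$. Either approach must be sharp enough to recover the tight factor~$2$ attained asymptotically by Example~\ref{lowerBoundShapley} with $B$ the finest split; this tightness is what makes the residual sub-case the crux of the problem and the step supported only experimentally in the paper.
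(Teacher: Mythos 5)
First, a point of calibration: the statement you are proving is stated in the paper as a \emph{conjecture}. The paper offers no proof of it --- only an exhaustive computational check on small instances (Section~\ref{experimentalResults}) and the observation that Theorem~\ref{fairnessTheorem} is the special case in which every player fully splits into singletons. The discussion section explicitly says the authors find the conjecture hard to prove even in restricted settings. So there is no ``paper proof'' to match, and you should not expect your plan to close completely.

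Your reduction and the two resolved regimes are correct, and they genuinely extend the territory where the conjecture is known to hold. The reformulation $2m\phi_1(B,m,T)\leq 1+m\phi_1(A,m,T)$ follows correctly from efficiency and symmetry in both games (all weight-$1$ players in $(B,m,T)$, whether original small players or split identities, share the same index, so the bookkeeping works out). The bound $\phi_1(B,m,T)\leq 1/(m+c)$ is a legitimate instance of the argument of Theorem~\ref{lemmaBasicUpperbound} applied to a unit-weight player, and the cross-multiplication in your second case does reduce exactly to $c\geq W_T(A)$ against the lower bound of Lemma~\ref{shapleyThmInductionProof} (note that in that case $W_T(A)\leq c<m$, so the lower bound is nonvacuous). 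Since a full split has $c=\sum_j a_j\geq W_T(A)$, your two cases strictly contain the paper's proven special case.

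The genuine gap is the residual case $c<\min(m,W_T(A))$, and you have not proved it --- you have only named two candidate strategies without an argument that either works. This is not a minor loose end: it is precisely the regime the paper identifies as the open problem (e.g.\ the clean sub-instance discussed in Section~7, where $m<T<\sum_i b_i$ and one must show the pivotal player is big with probability at least $1/2$, falls squarely in it). The elementary-split induction $A=B_0\to\cdots\to B_k=B$ cannot work by composing a per-step multiplicative loss, since a constant per-step factor compounds; it would require a global potential that you do not exhibit. The coupling idea likewise needs to beat the generic bound $\phi_1(B,m,T)\leq 1/(m+c)$ by an amount that is instance-dependent, and nothing in the proposal indicates how. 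So the proposal is a correct partial reduction plus an honest statement of where the difficulty lies, but it does not establish Conjecture~\ref{ShapleyConjecture}.
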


\begin{remark}
~~
\begin{itemize}

\item Section~\ref{experimentalResults} gives empirical results supporting the conjecture, obtained from an exhaustive search over small WVGs. 

\item 
Theorem~\ref{fairnessTheorem} is a special case of the conjecture,
where each player $i$'s strategy choice is $\overbrace{\{1,...,1\}}^{a_i}$.

\item Theorem 6 of \citeA{aziz2011false} states that a single player that splits her votes to exactly two identities cannot decrease its Shapley-Shubik index by more than a factor of $\frac{n+1}{2}$. Our conjecture gives a much stronger bound for the aggregate power of big players: The worst decrease of aggregate power, caused by any combination of splits, is by a constant factor of 2.

\end{itemize}
\end{remark}

\noindent
Combining Theorem~\ref{ShapleyGameTheorem} and Conjecture~\ref{ShapleyConjecture} yields:

\begin{corollary}
\label{FalseNameCorollary}
For all strategies $s_1,...,s_r$ in the Shapley false-name WVG $\{A,m,T\}$, if Conjecture~\ref{ShapleyConjecture} holds,
$$\frac{1}{2}\sum\limits_{i=1}^r \phi_{a_i}(A,m,T) \leq \sum\limits_{i=1}^r u_i^{\phi}(s_1,...,s_r) \leq 2P(A,m,T).$$
\end{corollary}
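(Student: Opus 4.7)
The plan is to treat this corollary as a direct combination of the two preceding results: Theorem~\ref{ShapleyGameTheorem} supplies the upper bound, and Conjecture~\ref{ShapleyConjecture} (assumed as a hypothesis) supplies the lower bound. The only real work is bookkeeping: unpacking the definition of the false-name payoffs $u_i^{\phi}$ so that both inequalities apply to the same quantity $\sum_{i,j}\phi_{b_i^j}(B,m,T)$.

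For the upper bound, I would first recall that by the payoff definition,
$$\sum_{i=1}^r u_i^{\phi}(s_1,\ldots,s_r) = \sum_{i=1}^r \sum_{j=1}^{c_i} \phi_{b_i^j}(B,m,T),$$
where $B$ is the post-split multi-set of weights. Then I would invoke Theorem~\ref{ShapleyGameTheorem} with $\alpha=\phi$, which states that this sum is at most $\bar{R}_\phi\cdot P(A,m,T)$. Plugging in $\bar{R}_\phi = 2$ from Theorem~\ref{fairnessTheorem} delivers the right-hand inequality $\sum_i u_i^\phi \leq 2P(A,m,T)$ immediately. Note that the proportional value $P(A,m,T)$ is invariant under splits, since splits preserve $\sum_j a_j = \sum_{i,j} b_i^j$ and leave $m$ unchanged, so it makes sense to write $P(A,m,T)$ rather than $P(B,m,T)$.

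For the lower bound, I would apply Conjecture~\ref{ShapleyConjecture} to the original WVG $\{A,m,T\}$ and the post-split WVG $\{B,m,T\}$. The conjecture asserts
$$\sum_{i=1}^r \phi_{a_i}(A,m,T) \leq 2\sum_{i=1}^r \sum_{j=1}^{c_i} \phi_{b_i^j}(B,m,T) = 2\sum_{i=1}^r u_i^{\phi}(s_1,\ldots,s_r),$$
and dividing by $2$ yields the left-hand inequality. Concatenating the two bounds gives the corollary.

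The main obstacle here is not in the corollary itself but is entirely pushed into Conjecture~\ref{ShapleyConjecture}: establishing that splitting cannot erode the aggregate Shapley-Shubik power of big players by more than a factor of two, for \emph{any} combination of splits by \emph{any} subset of big players. The corollary is genuinely just a two-line derivation once both ingredients are in place, so I would keep the proof itself extremely short and simply emphasize that $P(A,m,T) = P(B,m,T)$ (so Theorem~\ref{ShapleyGameTheorem} applies with the \emph{original} proportion on the right) and that the payoffs unfold into precisely the sum appearing on both sides of the conjecture's inequality.
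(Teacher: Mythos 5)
Your proposal is correct and matches the paper exactly: the paper states this corollary as an immediate combination of Theorem~\ref{ShapleyGameTheorem} (with $\bar{R}_{\phi}=2$ from Theorem~\ref{fairnessTheorem}) for the upper bound and Conjecture~\ref{ShapleyConjecture} for the lower bound, which is precisely your two-line derivation. Your added remark that $P(A,m,T)=P(B,m,T)$ because splits preserve the total weight is a sensible piece of bookkeeping that the paper handles implicitly inside the proof of Theorem~\ref{ShapleyGameTheorem}.
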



Our conclusions regarding the Shapley-Shubik power index are therefore the following:
false-name attacks can unboundedly increase the aggregate Shapley-Shubik power index of the big players, e.g., by splitting to singletons (Example~\ref{reverseRatio}). However, no attack can increase the power to more than twice the power resulting from the simple attack of splitting to singletons (Theorem~\ref{ShapleyGameTheorem}). False-name attacks can also decrease the Shapley-Shubik index (Example~\ref{lowerBoundShapley}). However, we believe, as expressed in Conjecture~\ref{ShapleyConjecture}, that no false-name attack can decrease the aggregate Shapley-Shubik power to be less than one-half of the original power.

Example~\ref{banzhafUnbounded} shows that an adaptation of Conjecture~\ref{ShapleyConjecture} does not hold for the Banzhaf index. For the Deegan-Packel index, even though  Theorem~\ref{thm:DPpropRatio} holds (similar to the Shapley-Shubik index case, and unlike the Banzhaf index case), an analogue of the conjecture fails to hold, as demonstrated in the following example:
\begin{example}
\label{ex:DPconjectureFails}
Consider $A = \{16\}, B = \{8,8\}, m = 16, T = 17$. For the WVG $(A,m,T)$ there are $16$ all-pivotal coalitions, all consisting of $a_1$ and some single small player, and so $\rho_{a_1}(A,m,T) = \frac{1}{2}$. For the WVG $(B,m,T)$, there are $16$ all-pivotal coalitions that include both $b_1, b_2$. There are $\binom{16}{9} = 11440$ all-pivotal coalitions with only $b_1$ (out of the big players), and the same amount with only $b_2$. There are no coalitions with no big players. Overall we have 
$$\rho_{b_1}(B,m,T) + \rho_{b_2}(B,m,T) = 2\rho_{b_1}(B,m,T) = 2 \frac{16 \cdot \frac{1}{3} + 11440 \cdot \frac{1}{10}}{22896} \approx 0.1004,$$
and so the ratio between the aggregate power of the big players before and after the split is approximately 5, larger than $\bar{R}_{\rho} \leq 3$. The example can further be generalized with $A = \{2k\}, B = \{k,k\}, m = 2k, T = 2k+1$. Then $\rho_{a_1}(A,m,T) = \frac{1}{2}, \rho_{b_1}(B,m,T) + \rho_{b_2}(B,m,T) = 2\frac{2k \cdot \frac{1}{3} + \binom{2k}{k+1} \cdot \frac{1}{k+2}}{2\binom{2k}{k+1} + 2k} \leq \frac{k + \binom{2k}{k+1} \cdot \frac{1}{k+2}}{\binom{2k}{k+1} + k} \stackrel{k\leq \binom{2k}{k+1} \cdot \frac{1}{k+2}}{\leq} \frac{2}{k+2}$ for large enough $k$. 
\end{example}

\subsection{The Worst-case Effects of False-name Manipulation for a Single Player}
\label{subsection:individualResults}
While the total utility of the big players is conjectured to not lose much by 
splits, a single player may multiplicatively lose arbitrarily much in B compared to A. This is evident by Example \ref{singlePlayerFairnessExample}, but we give two additional examples that do not require all players to fully split. In the first example, the player that chooses not to split loses by this choice. In the second example, the player that chooses to split loses by this choice. 
\begin{example}
\label{ex:loseByNotSplit}
Consider $A = \{k,k,k\}, B = \{k,\overbrace{1,...,1}^k,\overbrace{1,...,1}^k\}, m = k, T = 4k$. Then
\[
\begin{split}
    & \phi_{a_1}(A,m,T) = \frac{1}{k+3} \qquad  \phi_{b_1^1}(B,m,T) = \frac{1}{3k+1}.
\end{split}
\]
Thus, with $k\geq 6$, the ratio is higher than 2. The example can be generalized to exceed any bound $r$, with 
$A = \overbrace{\{k,...,k\}}^{r+1}, B = \{k,\overbrace{\overbrace{1,...,1}^k,...,\overbrace{1,...,1}^k}^{r}\}, m = k, T = (r+2)k.$
\end{example}

\begin{example}
\label{ex:loseBySplit}
Consider $A = \{k,k\}, B = \{\overbrace{1,...,1}^k,k\}, m = 0, T = k + 1$. Then
\[
\begin{split}
    & \phi_{a_1}(A,m,T) = \frac{1}{2} \qquad \sum\limits_{j=1}^{a_1}\phi_{b_1^j}(B,m,T) = \frac{1}{k+1}.
\end{split}
\]
\end{example}

The basic upper bound on the power of a single big player that Theorem~\ref{lemmaBasicUpperbound} yields, continues to hold under the possibility of splits, and it decreases as the number of splits increases:
\begin{corollary}
\label{FalseNameBasicUpperBound}
For a player $i$, and any strategy choice $s_1,...,s_r$ of the players, it holds that:
\[
\begin{split}
 & \boldsymbol{u_i^{\phi}(s_1,...,s_r)} = \sum\limits_{j=1}^{c_i} \phi_{b_i^j}(B,m,T) \stackrel{THM~\ref{lemmaBasicUpperbound}}{\leq} \\
 & \sum\limits_{j=1}^{c_i} \frac{b_i^j}{m + \sum\limits_{k=1}^r c_k}  \leq \sum\limits_{j=1}^{c_i} \frac{b_i^j}{m+r+(c_i-1)} = \\
 &\boldsymbol{\frac{a_i}{m+r+(c_i-1)}}.
\end{split}
\]
\end{corollary}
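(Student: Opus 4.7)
The proof is essentially a direct application of Theorem~\ref{lemmaBasicUpperbound} followed by summation. The plan is as follows.

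First, I would apply Theorem~\ref{lemmaBasicUpperbound} inside the post-split game $(B,m,T)$. In that game, the set of big identities has cardinality $c=\sum_{k=1}^r c_k$, so the theorem yields, for each $j\in\{1,\dots,c_i\}$, the individual bound $\phi_{b_i^j}(B,m,T)\le \tfrac{b_i^j}{m+c}$. (Note that Theorem~\ref{lemmaBasicUpperbound}'s proof only uses the total count of players in the game, so it applies regardless of whether some of the $b_i^j$ happen to equal $1$; if $b_i^j=1$, the bound $\tfrac{1}{m+c}$ is precisely the value obtained by symmetry.) Summing over $j=1,\dots,c_i$ and using $\sum_{j=1}^{c_i}b_i^j = a_i$ gives the first inequality displayed in the corollary,
\[
u_i^{\phi}(s_1,\dots,s_r)=\sum_{j=1}^{c_i}\phi_{b_i^j}(B,m,T)\;\le\;\sum_{j=1}^{c_i}\frac{b_i^j}{m+c}=\frac{a_i}{m+c}.
\]

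Second, I would convert the denominator $m+c$ to the claimed form $m+r+(c_i-1)$. Since every strategy $s_k\in Partitions(a_k)$ is a non-empty multiset, we have $c_k\ge 1$ for every $k$, and therefore
\[
c=\sum_{k=1}^r c_k = c_i + \sum_{k\neq i} c_k \;\ge\; c_i + (r-1)= r + (c_i-1).
\]
Substituting this into the previous display yields
\[
u_i^{\phi}(s_1,\dots,s_r)\;\le\;\frac{a_i}{m+c}\;\le\;\frac{a_i}{m+r+(c_i-1)},
\]
which is the corollary. No obstacle is expected here: the only non-trivial ingredient is Theorem~\ref{lemmaBasicUpperbound}, and the two observations $\sum_j b_i^j=a_i$ and $c_k\ge 1$ are immediate from the definitions of $Partitions(a)$ and the false-name game.
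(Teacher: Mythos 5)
Your proposal matches the paper's argument exactly: the corollary's displayed chain \emph{is} its proof, consisting of applying Theorem~\ref{lemmaBasicUpperbound} in the post-split game $(B,m,T)$ (whose total player count is $m+\sum_{k=1}^r c_k$), then using $c_k\ge 1$ for all $k$ to bound the denominator below by $m+r+(c_i-1)$, and finally summing $\sum_j b_i^j=a_i$. Your added remark that the theorem's proof depends only on the total number of players (so it applies even to identities of weight $1$) is a correct and worthwhile clarification, but the route is the same.
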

Note that we derive this corollary only for the Shapley-Shubik index, as examples~\ref{banzhafUnbounded} and~\ref{ex:DPnoBasicLowerBound} serve as counter-examples for the Banzhaf and Deegan-Packel indices respectively. 

\begin{corollary}
When $\alpha \in \{\phi, \rho\}$, then in any setting with a single big player $a_1$, $m$ small players, and any threshold $T$, the big player has a strategy that guarantees a fraction of at least $1/\bar{R}_{\alpha}$ of the power of its best possible strategy. In particular, this is the ``full split'' strategy. 
\end{corollary}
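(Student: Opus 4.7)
The plan is to observe that the ``full split'' strategy reduces the game to a situation where symmetry and efficiency of the power index pin down the big player's utility exactly, and then to combine this with the worst-case upper bound of Theorem~\ref{ShapleyGameTheorem} to get the claimed approximation ratio.

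First I would analyze the full split strategy $s_1 = \{1,1,\ldots,1\}$ (with $a_1$ ones). This transforms the WVG into one where $B = \overbrace{\{1,\ldots,1\}}^{a_1}$, so the induced game $(B, m, T)$ has $a_1 + m$ players all of weight $1$. Since both $\phi$ and $\rho$ satisfy symmetry (equal-weight players receive equal index values) and efficiency (indices sum to $1$), each of the $a_1 + m$ identical players receives exactly $\frac{1}{a_1 + m}$. Therefore
\[
u_1^{\alpha}(\text{full split}) \;=\; \sum_{j=1}^{a_1} \alpha_{b_1^j}(B,m,T) \;=\; \frac{a_1}{a_1+m} \;=\; P(A,m,T).
\]

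Next I would apply Theorem~\ref{ShapleyGameTheorem} to the single-big-player case: for every strategy $s_1 \in Partitions(a_1)$, the aggregate utility (which here coincides with $u_1^{\alpha}$ since there is only one strategic player) satisfies $u_1^{\alpha}(s_1) \leq \bar{R}_{\alpha}\, P(A,m,T)$. Consequently $\sup_{s_1} u_1^{\alpha}(s_1) \leq \bar{R}_{\alpha}\, P(A,m,T)$, and combining the two bounds gives
\[
\frac{u_1^{\alpha}(\text{full split})}{\sup_{s_1} u_1^{\alpha}(s_1)} \;\geq\; \frac{P(A,m,T)}{\bar{R}_{\alpha}\, P(A,m,T)} \;=\; \frac{1}{\bar{R}_{\alpha}},
\]
which is exactly the claimed guarantee.

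There is essentially no obstacle here; the corollary is a direct consequence of previously established facts. The only subtlety worth spelling out is that efficiency and symmetry both hold for the Shapley-Shubik and Deegan-Packel indices (explicitly noted earlier in the paper), which is why the full-split utility evaluates cleanly to $P(A,m,T)$. The corollary correctly excludes the Banzhaf index because $\bar{R}_{\beta}$ is unbounded (Theorem~\ref{banzhafUnbounded}), so no finite approximation ratio of this form is available there.
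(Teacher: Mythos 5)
Your proof is correct and follows essentially the same route as the paper's: bound every strategy's utility by $\bar{R}_{\alpha}\,P(A,m,T)$ via Theorem~\ref{ShapleyGameTheorem}, and note that the full split attains exactly $P(A,m,T)$ by symmetry and efficiency. You simply spell out the full-split computation in slightly more detail than the paper does.
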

\begin{proof}
An immediate application of Theorem~\ref{ShapleyGameTheorem} yields 
that for any strategy $s_1$, 
$$u_1^{\phi}(s_1) \leq \bar{R}_{\alpha} P(A,m,T). $$
Thus,
the proportional value for the big player is at least $1/\bar{R}_{\alpha}$ as good as the best possible strategy. By the symmetry property of the Shapley-Shubik index, $s_1 = \overbrace{\{1,...,1\}}^{a_1}$ (``full split'') guarantees the proportional value. 
\end{proof}

It is important to note that not every strategy guarantees this ratio, as demonstrated by Example~\ref{lowerBoundShapley}. In the example, a player choosing not to split (or split into large pieces) will achieve much less than her optimal strategy. 



\subsection{Experimental Results}
\label{experimentalResults}
To support Conjecture~\ref{ShapleyConjecture}, we ran an exhaustive validation over all WVGs with $m<25, \sum\limits_{j=1}^r a_j < 25$.
The python code can be found at
\href{https://github.com/yotam-gafni/shapley\_fairness\_result}{github.com/yotam-gafni/shapley\_fairness\_result}. 
%
A total of $5,833,920$ WVGs were checked against all valid sub-partitions of them, resulting in a total of $1,246,727,916$ valid pairs being compared. The maximal ratio attained was $1.958333'$. The minimal ratio attained was $0.08$.
The exhaustive search consists of two parts. First, using dynamic programming over a dual recursion to that of Lemma~\ref{recursionLemma} (see Appendix~\ref{recursionAppendix}), we built a full recursion table of all Shapley-Shubik indices for the WVGs in the range. Then, for each WVG we considered all valid partition strategy sets $B$.

While the maximal ratio over all instances of the experimental analysis was close to $2$, in most instances the ratio was much closer to 1. Figure~\ref{RatioHistograms} shows a histogram of the number of cases (on the y-axis) for different possible ratios between 0 and 2 (on the x-axis). As can be seen from the figure, the ratio is concentrated around 1.

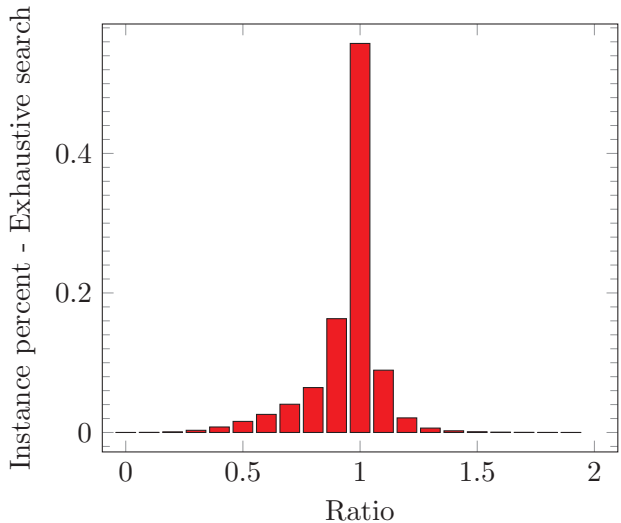
\begin{figure}
\centering
\scalebox{1}{
\begin{tikzpicture}
\begin{axis}[
	ylabel=Instance percent - Exhaustive search,
	xlabel=Ratio,
	enlargelimits=0.05,
	minor y tick num=9,
xmin=0, xmax=2,
axis on top]

\addplot [
  ybar,
  bar width=0.102874in, 
  bar shift=0in,
  fill=red,
  draw=black]
	coordinates {(1.0,0.5576346226613249) (0.7,0.04047445344923198) (1.4,0.002403652762997889) (0.6,0.02592546183108007) (1.3,0.0063072382506914205) (1.2,0.020914454280977213) (0.8,0.06438205880359865) (1.1,0.08931312965001419) (0.9,0.16313806917274482) (0.5,0.015885208589489865) (1.6,0.0004254103828056097) (0.4,0.00787652772828422) (0.3,0.0031184695153645696) (1.5,0.0010209974314876896) (1.7,0.00015063671679266385) (0.2,0.0008634289696934965) (1.8,3.332483332313544e-05) (0.1,0.000130638768820189) (1.9,1.3354958837706816e-06) (0.0,8.807053936217467e-07)};
\end{axis}
\end{tikzpicture}
}

\caption{Ratio of big players' power before and after splits}
\label{RatioHistograms}
\end{figure}

\section{Discussion and Future Directions}

Many questions remain open. We find Conjecture~\ref{ShapleyConjecture} hard to prove even in limited settings. For example, consider the WVG $\{B,m,T\}$ where $m<T<\sum_{i=1}^{|B|}b_i$, i.e., the overall weights of the small players are less than the threshold, which itself is less than the overall weights of the big players. In this case, it is possible to show that if we take $A = \{\sum_{i=1}^{|B|}b_i\}$, i.e., a single big player, then that player has a Shapley-Shubik index of 1 in the WVG $\{A,m,T\}$. The conjecture’s inequality in that case then states that the sum of Shapley-Shubik indices of the big players in $B$ is larger or equal to $\frac{1}{2}$. This reads as a very clean combinatorial problem: If we draw a permutation at random over a multi-set of integers $m\times \{1\} \cup B$, with $m<T<\sum_{i=1}^{|B|}b_i$, then the probability that the pivotal player (crossing the threshold $T$) is ``big'' is higher than the probability that it is ``small''. This can be even simplified further if we assume all big players are of identical size $k$.

The model itself could be generalized so that the threshold value $T$ and big players' values $A$ are not a multiple of the small players value, or into some other idea of looser distinctions between big and small players. For example, the Israeli Parliament currently has a minimal threshold of $3$ seats for a party to enter parliament. Thus, prior to elections, it does not make sense for a bloc to split into parties with less than at least $3$ seats each (as indicated by polls). Therefore, maintaining our notion that ``small'' blocs are these that can not split, any player $a$ with size $3 \leq a < 6$ is considered small. Any other player is considered big. Since not all small players now must have the same size, we no longer parameterize them by the number $m$, but by the multi-set $M$ of small players. As for $T$, consider that the threshold to pass a basic law in the Israeli Parliament is an absolute majority of $61$. This is not a multiple of $s=3$, unlike the case of $s=1$ we consider throughout the paper. As a concrete example, consider a WVG with weights $A = \{8,7\}, M = \{5,3\}, T = 10$. Then, $P(A,M,T) = \frac{8 + 7}{8 + 7 + 5 + 3} = \frac{15}{23}, \sum_{j=1}^{2}\phi_{a_j}(A,M,T) = \frac{2}{3}$, and so $\frac{\sum_{j=1}^{2}\phi_{a_j}(A,M,T)}{P(A,M,T)} = \frac{46}{45} \leq 2$. As we can see, the claim of Theorem~\ref{fairnessTheorem} is still meaningful (with appropriate adjustments) in these settings. In order to test it, we ran an exhaustive validation over all WVGs with $\sum_{j=1}^r a_j < \sum_{j=1}^{|M|} m_j < 25$. 
We find that over this set,\footnote{We remind the reader that if $\sum_{j=1}^r a_j \geq \sum_{j=1}^{|M|} m_j$, then $P(A,M,T) \geq \frac{1}{2}$, and so the ratio must be at most $2$. For this reason we forgo exact calculation for these WVGs. } indeed $\bar{R}_{\phi} \leq 2$. A total of $90,141$ WVGs were checked. The maximal ratio attained was $\approx 1.8585$, for the instance $A = \{22\}, M = \overbrace{\{3,...,3\}}^{8}, T = 22$. The minimal ratio attained was $\approx 0.22705$, for the instance $A = \{23\}, M = \overbrace{\{3,...,3\}}^{8}, T = 47$. We derived similar results for every $2 \leq s \leq 11$, which is the relevant range of interest for the WVGs examined. The results are summarized in Table~\ref{tab:robust-check} (ratios are given approximately). 

\begin{table}[]
    \centering
    \begin{tabular}{c|c|c|c}
         \textbf{s} & \textbf{\# Instances} & \textbf{Max ratio} & \textbf{Min ratio}\\
         \hline
         2 & $161,737$ & $1.88628$ & $0.15719$ \\
         \hline
         3 & $90,141$ & $1.8585$ & $0.22705$ \\
         \hline
         4 & $46,262$ & $1.83673$ & $0.291925$ \\
        \hline
         5 & $23,386$ & $ 1.8666$ & $0.40869$ \\
        \hline
         6 & $12,691$ & $1.88888$ & $0.40869$ \\
        \hline
         7 & $8,075$ & $1.8095$ & $0.510869$ \\
        \hline
         8 & $5,310$ & $1.8088$ & $0.510869$ \\
        \hline
         9 & $2,642$ & $1.5555$ & $0.681159$ \\
        \hline
         10 & $1,021$ & $1.46666$ & $0.681159$ \\
        \hline
         11 & $231$ & $1.3939$ & $0.681159$ \\
    \end{tabular}
    \caption{Aggregate big players' power vs. proportion ratios with varying ``small'' definition}
    \label{tab:robust-check}
\end{table}

Other interesting extensions for the work can be deriving tight bounds for the Deegan-Packel index, and exploring similar results for other power indices in common use, such as these of \citeA{johnston1978measurement}, 
\citeA{hollerPackel}, and  
\citeA{coleman1971Control}. 
Generalizing our results to a larger class of cooperative games is also interesting. 

\section*{Acknowledgements}
A preliminary version of this paper has appeared in the proceedings of the 30th International Joint Conference on Artificial Intelligence (IJCAI-21), and was presented at the 8th International Workshop on Computational Social Choice (COMSOC-2021). We thank all anonymous reviewers for their helpful comments. 

Yotam Gafni and Moshe Tennenholtz were supported by the European Research Council (ERC) under the European Union’s Horizon 2020 research and innovation programme (Grant No. 740435).  

\noindent Ron Lavi was partially supported by the ISF-NSFC joint research program (grant No. 2560/17).

\appendix

\section{Proof of Step (2) in the \ref{recursionStep1} Equation, Used in the Proof of Lemma~\ref{shapleyThmInductionProof}}
\label{app:step2}

\begin{lemma}
\label{shapleyInductionInequalities}
\[
\begin{split}
& \frac{m}{m+r} \cdot \frac{m-1 - \mathlarger{\sum}\limits_{j=2}^T r_j \min(j, T-1)}{m-1+\displaystyle \sum\limits_{j=2}^T r_j\min(j, T-1)} + \\
& \mathlarger{\mathlarger{\mathlarger{\sum\limits_{i=2}^{T-1}}}} \frac{r_i}{m+r} \cdot \frac{m  -\left(\mathlarger{\sum}\limits_{2\leq j\neq i \leq T} r_j \min(j, T-i)\right) - (r_i - 1)\min(i,T-i)}{m+\left(\mathlarger{\sum}\limits_{2\leq j\neq i\leq T} r_j \min(j, T-i)\right)+ (r_i - 1)\min(i,T-i)} \geq  \frac{m  -  \displaystyle \sum\limits_{j=1}^r a_j }{m + \displaystyle \sum\limits_{j=1}^r a_j}
\end{split}
\]
\end{lemma}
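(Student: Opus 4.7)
After invoking the assumption $a\le T$ made in the proof of Lemma~\ref{shapleyThmInductionProof}, the inner sums simplify substantially. Let $W=\sum_{j=1}^r a_j$. One checks directly that $\sum_{j=2}^T r_j\min(j,T-1)=W-r_T$, and setting $Q_i=\sum_{j=2}^T r_j\min(j,T-i)$, the ``big-player weight'' appearing in the $i$-th summand on the left equals $S_i := Q_i - \min(i,T-i)$. Writing $g(x)=\frac{m-x}{m+x}$ and $g_1(x)=\frac{m-1-x}{m-1+x}$, the target inequality takes the clean form
\[
\tfrac{m}{m+r}\, g_1(W-r_T)\;+\;\sum_{i=2}^{T-1}\tfrac{r_i}{m+r}\,g(S_i)\;\ge\;g(W).
\]

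The plan is to reduce this to a statement about the single convex function $g$ via the identity $g_1(x)=g(x)-\frac{2x}{(m+x)(m-1+x)}$, and then to exploit a structural ``discount identity'' for the $S_i$. In Case A ($a<T$, so $r_T=0$) the weights $\frac{m}{m+r}$ and $\frac{r_i}{m+r}$ sum to exactly $1$, and Jensen's inequality gives $\frac{m}{m+r}g(W)+\sum_i\frac{r_i}{m+r}g(S_i)\ge g(\bar x)$ where $\bar x=\frac{mW+\sum_i r_iS_i}{m+r}$. A double-counting argument (expanding $\sum_i r_i Q_i=\sum_{i,j} r_ir_j\min(j,T-i)$ and pairing terms across the threshold) yields
\[
rW-\sum_{i=2}^{T-1} r_iS_i \;=\; \sum_{i=2}^{T-1} r_i\min(i,T-i)+\!\!\sum_{\substack{2\le i\le T-1\\ j>T-i}}\!\! r_ir_j(i+j-T),
\]
so $\bar x<W$ strictly, with explicit slack $D := W-\bar x$ bounded below by contributions from both the diagonal terms $r_i\min(i,T-i)\ge 1$ and the ``over-threshold'' pairings. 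Using $g(W-D)-g(W)=\frac{2mD}{(m+W-D)(m+W)}$, the bulk of the work is then to show that this Jensen gain exceeds the correction $\frac{m}{m+r}\cdot\frac{2(W-r_T)}{(m+W-r_T)(m-1+W-r_T)}$ coming from the $g_1\to g$ swap. Case B ($a=T$, $r_T>0$) is handled in parallel: the weights now sum to $\frac{m+r-r_T}{m+r}<1$, but this deficit is compensated by the head-start $u=W-r_T<W$ already embedded in the first numerator.

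The main obstacle is that the inequality is genuinely tight: explicit small instances (e.g.\ $m=10$, $r=2$, a single weight $k=2$ and $T=5$) give margins of order $1/m$, so neither Jensen alone nor a first-order tangent-line estimate at $W$ is sharp enough on its own. The proof therefore cannot stop at the convex-combination step. Instead, one must clear all denominators and verify a polynomial inequality in $m$, $W$, and the $r_i$'s; after expansion and cancellation, the difference of the two sides should factor as a sum of manifestly nonnegative quantities indexed by the diagonal terms $r_i\min(i,T-i)$ and by the off-diagonal pairs $r_ir_j(i+j-T)$ with $i+j>T$, mirroring precisely the discount identity above. The per-$i$ gains cannot be bounded in isolation — they must be aggregated globally — and reconciling this global structure with the asymmetric ``$m-1$ versus $m$'' appearance in the first term is the technically delicate step.
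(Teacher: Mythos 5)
Your setup is sound: the reformulation with $g(x)=\frac{m-x}{m+x}$, $g_1(x)=\frac{m-1-x}{m-1+x}$, the identity $\sum_{j=2}^T r_j\min(j,T-1)=W-r_T$ (valid under the standing assumption $a\le T$), the identification of the $i$-th argument as $S_i=Q_i-\min(i,T-i)$, and the double-counting identity $rW-\sum_i r_iS_i=\sum_i r_i\min(i,T-i)+\sum_{i+j>T}r_ir_j(i+j-T)$ are all correct. But the proof is not there: the one inequality that carries all the content --- that the gain over $g(W)$ dominates the correction $\frac{m}{m+r}\cdot\frac{2(W-r_T)}{(m+W-r_T)(m-1+W-r_T)}$ from the $g_1\to g$ swap --- is never established. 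Worse, the Jensen route you propose as the first line of attack provably falls short even in the simplest instance: for a single big player of weight $k<T$ one has $S_k=0$, $\bar x=\frac{mk}{m+1}$, and a direct computation gives $g(\bar x)-g(W)=\frac{2k}{(m+1+k)(m+k)}$ while the correction is $\frac{2mk}{(m+1)(m+k)(m+k-1)}$; the former exceeds the latter only when $k-1\ge m$, which fails throughout the relevant regime $m>W$. (The true convex combination $\frac{m}{m+1}g(k)+\frac{1}{m+1}g(0)$ equals $g(k)+\frac{2k}{(m+1)(m+k)}$, which does suffice --- the loss is entirely in the Jensen step.) You correctly sense this and retreat to ``clear all denominators and the difference should factor as a sum of manifestly nonnegative quantities,'' but that is an unverified conjecture about a high-degree polynomial in $m$, $T$ and all the $r_i$, not an argument. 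As submitted, the lemma is reduced to itself.

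For comparison, the paper avoids both the lossy averaging and the global polynomial expansion by first weakening each summand \emph{individually}: it shows $S_i\le W-r_T-i$ (so each $i$-th term is at least $\frac{r_i}{m+r}\,g(W-r_T-i)$, retaining an explicit $-i$ shift that supplies exactly the second-order margin your $\bar x$ discards), and similarly replaces the first term's argument by $W-r_T$. The remaining inequality $\frac{m}{m+r}g_1(W-r_T)+\sum_i\frac{r_i}{m+r}g(W-r_T-i)\ge g(W)$ is then handled by explicit algebra --- extracting the same $\frac{2x}{(\cdot)(\cdot)}$-type corrections you identify, regrouping around the common value $\frac{m-W+r_T}{m+W-r_T}$, and finishing with a short case analysis on $r_T\in\{0\},\{1\},\{\ge 2\}$ (the $r_T=1$ case needing separate treatment because the residual term $\frac{(W+r-r_T)r_T}{(m+r)(m+W-r_T)}-\frac{2mr_TT}{(m+r)(m+W-r_T-1)(m+W-r_T)}$ is not obviously nonnegative there). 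If you want to salvage your route, the minimal fix is to replace Jensen by the exact two-point identity for each term (equivalently, adopt the per-term bound $S_i\le W-r_T-i$) before aggregating; your discount identity then becomes the bookkeeping device for the case analysis rather than a substitute for it.
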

\begin{proof}

We define (within the following equations), $G^{-}(A,m,T,i),G^{+}(A,m,T,i)$ for shorter notations. Observe that for any weight $2\leq i \leq T-1$,
\[
\begin{split}
& G^{-}(A,m,T,i) \stackrel{def.}{=} m  -\left(\sum\limits_{2\leq j\neq i \leq T} r_j \min(j, T-i)\right) - (r_i - 1)\min(i,T-i) \geq \\
& m  - \left(\sum\limits_{2\leq j\neq i \leq T-1} r_j\cdot j\right) - r_T \cdot (T-1) - (r_i - 1)i = m  - \sum\limits_{j=2}^T r_j \cdot j + r_T + i= m - \sum\limits_{j=1}^r a_j + r_T + i,
\end{split}
\]
and similarly, 
\[
\begin{gathered}
 G^{+}(A,m,T,i) \stackrel{def.}{=} m  +\left(\sum\limits_{2\leq j\neq i \leq T} r_j \min(j, T-i)\right) + (r_i - 1)\min(i,T-i) \leq m + \sum\limits_{j=1}^r a_j - r_T - i,  \\
  \sum\limits_{j=2}^T r_j \min(j, T-1) = \sum\limits_{j=1}^r a_j - r_T.
\end{gathered}
\]

By making the relevant substitutions to the LHS of the lemma's claim, we thus have, 

\begin{equation}
\begin{split}
    & \frac{m}{m+r} \cdot \frac{m-1-\displaystyle \sum\limits_{j=2}^T r_j \min(j, T-1)}{m-1+\displaystyle \sum\limits_{j=2}^T r_j \min(j, T-1)} + \mathlarger{\mathlarger{\mathlarger{\sum\limits_{i=2}^{T-1}}}}\frac{r_i}{m+r} \cdot \frac{G^{-}(A,m,T,i)}{G^{+}(A,m,T,i)}
    \geq \\
& \frac{m}{m+r} \cdot \frac{m - \displaystyle \sum\limits_{j=1}^r a_j + r_T - 1}{m  + \displaystyle \sum\limits_{j=1}^r a_i - r_T - 1} + \mathlarger{\mathlarger{\mathlarger{\sum\limits_{i=2}^{T-1}}}} \frac{r_i}{m+r} \cdot \frac{m  - \displaystyle\sum\limits_{j=1}^r a_i + r_T + i}{m  + \displaystyle \sum\limits_{j=1}^r a_i - r_T -i}.
    \end{split}
\end{equation}

For the case where $r_T = 1$ we further have
\begin{equation}
    \label{caseEq1}
\begin{split}
    &\frac{m}{m+r} \cdot \frac{m - \sum\limits_{j=1}^r a_j + r_T - 1}{m + \sum\limits_{j=1}^r a_j - r_T - 1} + \sum\limits_{i=2}^{T-1} \frac{r_i}{m+r} \cdot \frac{m  - \sum\limits_{j=1}^r a_j + r_T + i}{m + \sum\limits_{j=1}^r a_j - r_T -i} = \\
    & \frac{m}{m+r} \cdot \frac{m  -  \sum\limits_{j=1}^r a_j }{m + \sum\limits_{j=1}^r a_j - 2}  + \sum\limits_{i=2}^{T-1} \frac{r_i}{m+r} \cdot \frac{m  - \sum\limits_{j=1}^r a_j +i + 1}{m + \sum\limits_{j=1}^r a_j - i - 1} \stackrel{(1)}{\geq} \\
    &  \frac{m+1}{m+r}\cdot \frac{m  -  \sum\limits_{j=1}^r a_j }{m + \sum\limits_{j=1}^r a_j} + \sum\limits_{i=2}^{T-1} \frac{r_i}{m+r} \cdot \frac{m  -  \sum\limits_{j=1}^r a_j }{m + \sum\limits_{j=1}^r a_j} = \\
    & \frac{m + \sum\limits_{i=2}^{T-1}r_i + r_T}{m+r} \cdot \frac{m  -  \sum\limits_{j=1}^r a_j }{m + \sum\limits_{j=1}^r a_j} = \frac{m  -  \sum\limits_{j=1}^r a_j }{m + \sum\limits_{j=1}^r a_j},
 \end{split}
 \end{equation}
 
 where for $(1)$ note that by arithmetics, $\displaystyle \frac{m}{m + \sum\limits_{j=1}^r a_j - 2} \geq \frac{m+1}{m + \sum\limits_{j=1}^r a_j}$. 

We next treat the case where $r_T \neq 1$. 
\begin{equation}
\label{inductionStep2}
\begin{split}
& \frac{m}{m+r} \cdot \frac{m - \sum\limits_{j=1}^r a_j + r_T - 1}{m + \sum\limits_{j=1}^r a_i - r_T - 1} + \sum\limits_{i=2}^{T-1} \frac{r_i}{m+r} \cdot \frac{m  - \sum\limits_{j=1}^r a_j + r_T + i}{m \sum\limits_{j=1}^r a_j - r_T -i} \stackrel{(1)}{\geq} \\
& \frac{m  -  \sum\limits_{j=1}^r a_j}{m+ \sum\limits_{j=1}^r a_j - r_T} + \frac{(\sum\limits_{j=1}^r a_j + r - r_T)r_T}{(m+r)(m + \sum\limits_{j=1}^r a_j - r_T)} - \frac{2m r_T T}{(m+r)(m+ \sum\limits_{j=1}^r a_j - r_T - 1)(m+ \sum\limits_{j=1}^r a_j - r_T)} \stackrel{(2)}{\geq} \\
& \frac{m - \sum\limits_{j=1}^r a_j}{m + \sum\limits_{j=1}^r a_j}
\end{split}
\end{equation}
We develop (1) in Appendix~\ref{detailedInductionStep2Ext}. We prove (2) by examining two cases, based on the value of $r_T$. 

\begin{enumerate}
\item If $r_T = 0$, it is immediate by assignment. 

\item If $r_T \geq 2$, we have
\[
\begin{split}
& \overbrace{\sum\limits_{j=1}^r a_j +r - r_T}^{Exp1} \geq r_T \cdot T + r - r_T \geq \overbrace{2T}^{Exp2}, \qquad \overbrace{m + \sum\limits_{j=1}^r a_j - r_T - 1}^{Exp3}\geq m + r_T T - r_T - 1 \geq \overbrace{m}^{Exp4}
\end{split}
\]
which allows us to replace $Exp1$ with $Exp2$, cancel $Exp3$ and $Exp4$ with each other, and arrive at
\[
\begin{split}
   & \frac{m  -  \sum\limits_{j=1}^r a_j}{m+ \sum\limits_{j=1}^r a_j - r_T} + \frac{\overbrace{(\sum\limits_{j=1}^r a_j + r - r_T)}^{Exp1}r_T}{(m+r)(m + \sum\limits_{j=1}^r a_j - r_T)} - \frac{2\overbrace{m}^{Exp4}r_T T}{(m+r)\underbrace{(m+ \sum\limits_{j=1}^r a_j - r_T - 1)}_{Exp3}(m+ \sum\limits_{j=1}^r a_j - r_T)} \geq \\
   & \frac{m  -  \sum\limits_{j=1}^r a_j}{m+ \sum\limits_{j=1}^r a_j - r_T} + \frac{\overbrace{2T}^{Exp2} r_T}{(m+r)(m + \sum\limits_{j=1}^r a_j - r_T)} - \frac{2r_T T}{(m+r)(m+ \sum\limits_{j=1}^r a_j - r_T)} = \\
   & \frac{m  -  \sum\limits_{j=1}^r a_j}{m+ \sum\limits_{j=1}^r a_j - r_T} \geq \frac{m  -  \sum\limits_{j=1}^r a_j}{m+ \sum\limits_{j=1}^r a_j}.
\end{split}
\]

\end{enumerate}

\end{proof}

\section{Expansion of the Steps to Derive (1) in Eq.~\ref{inductionStep2}}
\label{detailedInductionStep2Ext}

A note on notation - before each equality/inequality, we mark the expressions that are to be manipulated in the subsequent step. We note that these expressions are not related to the next ones after the equality/inequality sign. For example, $Exp1$ and $Exp2$ are central to transformation $(1)$, but they are unrelated to $Exp3$ and $Exp4$, which are marked because they are central to transformation $(2)$. After giving the full chain of inequalities, we explain each transformation.

\begin{align*}
& \frac{m}{m+r} \cdot \overbrace{\frac{m  -  \sum\limits_{j=1}^r a_j + r_T - 1}{m + \sum\limits_{j=1}^r a_j - r_T - 1 }}^{Exp1} + \sum\limits_{i=2}^{T-1} \frac{r_i}{m+r} \cdot \overbrace{\frac{m - \sum\limits_{j=1}^r a_j + r_T + i}{m + \sum\limits_{j=1}^r a_j - r_T - i}}^{Exp2} \stackrel{(1)}{=} \\
& \frac{m}{m+r} \cdot \left(\frac{m  -  \sum\limits_{j=1}^r a_j + r_T}{m+ \sum\limits_{j=1}^r a_j - r_T} +\frac{\overbrace{2r_T}^{Exp3} - 2\sum\limits_{j=1}^r a_j}{(m+ \sum\limits_{j=1}^r a_j - r_T - 1)(m+ \sum\limits_{j=1}^r a_j - r_T)}\right) + \\
& \frac{1}{m+r}\sum\limits_{i=2}^{T-1} r_i \left(\frac{m - \sum\limits_{j=1}^r a_j + r_T}{m + \sum\limits_{j=1}^r a_j  - r_T} + \frac{2mi}{\underbrace{(m+ \sum\limits_{j=1}^r a_j - r_T - i)}_{Exp4}(m+ \sum\limits_{j=1}^r a_j - r_T)}\right) \stackrel{(2)}{\geq} \\
& \frac{m}{m+r} \cdot \left(\overbrace{\frac{m  -  \sum\limits_{j=1}^r a_j + r_T}{m+ \sum\limits_{j=1}^r a_j - r_T}}^{Exp5} -\frac{2\sum\limits_{j=1}^r a_j}{(m+ \sum\limits_{j=1}^r a_j - r_T - 1)(m+ \sum\limits_{j=1}^r a_j - r_T)}\right) + \\
& \frac{1}{m+r}\sum\limits_{i=2}^{T-1} r_i \left(\overbrace{\frac{m  - \sum\limits_{j=1}^r a_j + r_T}{m  + \sum\limits_{j=1}^r a_j - r_T}}^{Exp5} + \frac{2mi}{(m+ \sum\limits_{j=1}^r a_j - r_T - 1)(m+ \sum\limits_{j=1}^r a_j - r_T)}\right) \stackrel{(3)}{=} \\
& \overbrace{\frac{m + \sum\limits_{i=2}^{T-1}r_i}{m+r}}^{Exp6} \cdot \frac{m\!-\!\sum\limits_{j=1}^r a_j\!+\!r_T}{m\!+\!\sum\limits_{j=1}^r a_j\!-\!r_T} + \frac{\overbrace{2m\sum\limits_{i=2}^{T-1}r_i \cdot i - 2m\sum\limits_{j=1}^{r}a_j}^{Exp7}}{(m\!+\!r)(m\!+\! \sum\limits_{j=1}^r a_j - r_T - 1)(m+ \sum\limits_{j=1}^r a_j - r_T)} \stackrel{(4)}{=} \\
& \overbrace{\left(1\!-\!\frac{r_T}{m+r}\right) \cdot \frac{m\!-\!\sum\limits_{j=1}^r a_j\!+\!r_T}{m\!+\!\sum\limits_{j=1}^r a_j\!-\!r_T}}^{Exp8} - \frac{2mr_T T}{(m\!+\!r)(m\!+\! \sum\limits_{j=1}^r a_j - r_T - 1)(m+ \sum\limits_{j=1}^r a_j - r_T)} \stackrel{(5)}{=} \\
& \frac{m  -  \sum\limits_{j=1}^r a_j}{m+ \sum\limits_{j=1}^r a_j - r_T} + \frac{(\sum\limits_{j=1}^r a_j + r - r_T)r_T}{(m+r)(m + \sum\limits_{j=1}^r a_j - r_T)} - \\
& \frac{2mr_T T}{(m+r)(m+ \sum\limits_{j=1}^r a_j - r_T - 1)(m+ \sum\limits_{j=1}^r a_j - r_T)}. 
\end{align*}

Hence we arrive at the desired expression. We state in words the changes introduced in each step:
\begin{enumerate}
    \item We transform $Exp1$ by performing
   \begin{align*}
 &  \frac{m - \sum\limits_{j=1}^r a_j + r_T - 1}{m + \sum\limits_{j=1}^r a_j - r_T - 1} = \frac{m - \sum\limits_{j=1}^r a_j + r_T - 1}{m + \sum\limits_{j=1}^r a_j - r_T - 1} \cdot \frac{m + \sum\limits_{j=1}^r a_j - r_T -1 + 1}{m + \sum\limits_{j=1}^r a_j - r_T} = \\
 & \frac{m - \sum\limits_{j=1}^r a_j + r_T - 1}{m + \sum\limits_{j=1}^r a_j - r_T - 1} \cdot \left(\frac{m + \sum\limits_{j=1}^r a_j - r_T -1}{m + \sum\limits_{j=1}^r a_j - r_T}+\frac{1}{m + \sum\limits_{j=1}^r a_j - r_T}\right) = \\
 & \frac{m - \sum\limits_{j=1}^r a_j + r_T - 1}{m + \sum\limits_{j=1}^r a_j - r_T} + \frac{m - \sum\limits_{j=1}^r a_j + r_T - 1}{(m + \sum\limits_{j=1}^r a_j - r_T - 1)(m + \sum\limits_{j=1}^r a_j - r_T)} = \\
 & \frac{m - \sum\limits_{j=1}^r a_j + r_T}{m + \sum\limits_{j=1}^r a_j - r_T} + \frac{(m - \sum\limits_{j=1}^r a_j + r_T - 1) - (m + \sum\limits_{j=1}^r a_j - r_T - 1)}{(m + \sum\limits_{j=1}^r a_j - r_T - 1)(m + \sum\limits_{j=1}^r a_j - r_T)} = \\
 & \frac{m  -  \sum\limits_{j=1}^r a_j + r_T}{m+ \sum\limits_{j=1}^r a_j - r_T} +\frac{2r_T - 2\sum\limits_{j=1}^r a_j}{(m+ \sum\limits_{j=1}^r a_j - r_T - 1)(m+ \sum\limits_{j=1}^r a_j - r_T)}
 \end{align*}

    A similar transformation is applied to $Exp2$. 
    
    \item We omit $Exp3$. We substitute $Exp4$, which is $(m + \sum\limits_{j=1}^r a_j -r_T - i)$, with $(m + \sum\limits_{j=1}^r a_j -r_T - 1)$. 
    
    \item We gather the terms multiplying $Exp5$  (which is $\frac{m - \sum\limits_{j=1}^r a_j + r_T}{m + \sum\limits_{j=1}^r a_j -r_T}$) as the first summand, and all other terms as the second summand. 
    
    \item Since $r = \sum\limits_{j=2}^T r_j$, we can rewrite $Exp6$ as
    $\displaystyle \frac{m + \sum\limits_{i=2}^{T-1} r_i}{m + r} = \frac{m + r - r_T}{m + r} = 1 - \frac{r_T}{m+r}$. 
    
    Since $\sum\limits_{j=1}^r a_j = \sum\limits_{i=2}^T r_i \cdot i$, we can rewrite $Exp7$
    $$ 2m \sum\limits_{i=2}^{T-1} r_i \cdot i - 2m \sum\limits_{j=1}^r a_j =  2m \sum\limits_{i=2}^{T-1} r_i \cdot i - 2m\sum\limits_{j=2}^T r_j \cdot j = -2m r_T \cdot T.$$
    
    \item We rewrite $Exp8$ 
    \[
    \begin{split}
        & \left(1 - \frac{r_T}{m+r}\right) \cdot \frac{m - \sum\limits_{j=1}^r a_j + r_T}{m + \sum\limits_{j=1}^r a_j - r_T} = \\
        & \frac{m - \sum\limits_{j=1}^r a_j}{m + \sum\limits_{j=1}^r a_j - r_T} \left(1 - \frac{r_T}{m + r}\right) + \frac{r_T}{m + \sum\limits_{j=1}^r a_j - r_T} \cdot \frac{m + r - r_T}{m + r} = \\
        & \frac{m - \sum\limits_{j=1}^r a_j}{m + \sum\limits_{j=1}^r a_j - r_T} - \frac{r_T}{m+r} \cdot \frac{m - \sum\limits_{j=1}^r a_j}{m + \sum\limits_{j=1}^r a_j - r_T} + \frac{r_T}{m + \sum\limits_{j=1}^r a_j - r_T} \cdot \frac{m+r - r_T}{m+r} = \\
        & \frac{m - \sum\limits_{j=1}^r a_j}{m + \sum\limits_{j=1}^r a_j - r_T} + \frac{r_T( \sum\limits_{j=1}^r a_j + r - r_T)}{(m+r)(m + \sum\limits_{j=1}^r a_j - r_T)}. 
    \end{split}
    \]
\end{enumerate}

\section{Proof of Theorem~\ref{banzhafUnbounded}}
\label{appendix:banzhafUnbounded}

\begin{proof} 
Choose $k = (2n_k)^2$ for some natural number $n_k$, so $k = 4, 16, 36, ...$. This choice maintains the property that the weighted voting game parameters and all binomial coefficients' arguments throughout the proof are natural numbers. Consider $\displaystyle A = \{2k\}, m = k^{1.5}, T = \frac{2k + k^{1.5}}{2}$. Then
$P(A,m,T) = \frac{2k}{2k + k^{1.5}} \in o(1)$
and
$$ v(S\cup \{a_1\}) - v(S) = \begin{cases}
1 & \frac{k^{1.5}}{2} - k \leq |S| < \frac{k^{1.5}}{2} + k \\ 
0 & otherwise.
\end{cases}$$
The absolute Banzhaf index gives an equal probability to each subset in $P((A\setminus \{a_1\})\cup M) = P(M)$. There are $2^{k^{1.5}}$ subsets, and by the above, each one of those adds one to the index iff it has $i$ small players as specified above. Thus,
\begin{equation}
\label{BanzhafBinom}
\begin{split}
    & \beta'_{a_1}(A,m,T) = \hspace{-5ex} \mathlarger{\sum}\limits_{\substack{S \in P(M) \\ \frac{k^{1.5}}{2} - k \leq |S| < \frac{k^{1.5}}{2} + k}}\hspace{-5ex} \frac{1}{2^{k^{1.5}}} = 
 \frac{1}{2^{k^{1.5}}} \mathlarger{\sum\limits_{i=\frac{k^{1.5}}{2} - k}^{\frac{k^{1.5}}{2} + k - 1}}\binom{k^{1.5}}{i} \geq \\
 & Pr_{X\sim B(k^{1.5},\frac{1}{2})}\left[|X - \frac{k^{1.5}}{2}| < k \right] = \\
 & 1 - Pr_{X\sim B(k^{1.5},\frac{1}{2})}\left[|\frac{X}{k^{1.5}} - \frac{1}{2}| \geq \frac{1}{\sqrt{k}}\right],
     \end{split}
\end{equation}
where $B(n,p)$ is the binomial distribution with $n$ players and success probability $p$. 

We cite a Chernoff-type inequality (By Theorem 1 of \citeR{binomialChernoff}), for $\displaystyle X\sim B(n,p), c>0$:
\begin{equation}
Pr_{X\sim B(n,p)}\left[|\frac{X}{n} - p| \geq c\right] < 2e^{-2nc^2}
\end{equation}
If we choose $n = k^{1.5}, p = \frac{1}{2}, c = \frac{1}{\sqrt{k}}$, we get
\begin{equation}
    \label{OurChernoff}
    Pr_{X\sim B(k^{1.5},\frac{1}{2})}\left[|\frac{X}{k^{1.5}} - \frac{1}{2}| \geq \frac{1}{\sqrt{k}}\right] < 2e^{-2\sqrt{k}} \leq 0.05
\end{equation}

So Eq.~\ref{BanzhafBinom} and Eq.~\ref{OurChernoff} yield 
\begin{equation}
    \label{bigPlayerAbsBanzhafLowerBound}
    \beta'_{a_1}(A,m,T) \geq 0.95.
\end{equation} 
This concludes the argument for the absolute Banzhaf index. For the normalized Banzhaf index, we first need to bound the small player's absolute Banzhaf index. Notice that for a fixed small player to be pivotal, the sum of subset $S$ elements needs to be exactly $T -1 = \frac{k^{1.5}}{2} + k - 1$. This means that either $a_1 \in S$ and there are $\frac{k^{1.5}}{2} - k - 1$ small players (other than our one fixed small player) in the subset, or $a_1 \not \in S$ and there are $\frac{k^{1.5}}{2} + k - 1$ other small players in the subset. Thus,
\begin{equation}
\label{smallPlayerAbsBanzhafBound}
\begin{split}
    & \beta'_1(A,m,T) = Pr[a_1 \in S] \frac{1}{2^{m-1}} \binom{m-1}{\frac{k^{1.5}}{2} - k -1} + \\
    & Pr[a_1 \not \in S] \frac{1}{2^{m-1}} \binom{m-1}{\frac{k^{1.5}}{2} + k -1} = \\
    &\frac{1}{2^{k^{1.5}}} \left[\binom{k^{1.5} - 1}{\frac{k^{1.5}}{2} + k - 1} + \binom{k^{1.5} - 1}{\frac{k^{1.5}}{2} - k - 1}\right] = \\
& \frac{1}{2^{k^{1.5}}} \left[\binom{k^{1.5} - 1}{\frac{k^{1.5}}{2} + k - 1} + \binom{k^{1.5} - 1}{\frac{k^{1.5}}{2} +k}\right]  \stackrel{(1)}{=} \\
& \frac{1}{2^{k^{1.5}}} \binom{k^{1.5}}{\frac{k^{1.5}}{2}+k} \leq Pr_{X\sim B(k^{1.5},\frac{1}{2})}\left[|X - \frac{k^{1.5}}{2}| \geq k \right] = \\
& Pr_{X\sim B(k^{1.5},\frac{1}{2})}\left[|\frac{X}{k^{1.5}} - \frac{1}{2}| \geq \frac{1}{\sqrt{k}}\right] \stackrel{Eq.\ref{OurChernoff}}{<} 2e^{-2\sqrt{k}},
\end{split}
\end{equation}
where $(1)$ is by Pascal identity. So for the normalized Banzhaf index, we can write

$ \displaystyle \beta_{a_1}(A,m,T) = \frac{\beta'_{a_1}}{k^{1.5} \beta'_1(A,m,T) + \beta'_{a_1}} \stackrel{Eq.~\ref{eqBanzhafIndividualUpperBound} +\ref{bigPlayerAbsBanzhafLowerBound} + \ref{smallPlayerAbsBanzhafBound}}\geq  \frac{0.95}{2k^{1.5} e^{-2\sqrt{k}} + 1} \geq 0.7
$
\end{proof}

\section{Proof of Theorem~\ref{thm:DPpropRatio} for the High Threshold Regime}
\label{appendix:DPFirstPartProof}

\begin{lemma}
\label{DPFirstPart}
If for some $A,m,T$, $\displaystyle \frac{\displaystyle \sum\limits_{i=1}^r \rho_{a_i}(A,m,T)}{P(A,m,T)} > 3$, then $T < \frac{m}{2}$ and $\displaystyle \sum\limits_{j=1}^r a_j < m$. 
\end{lemma}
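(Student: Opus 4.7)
I will prove the contrapositive: if $T \geq m/2$ or $W_A := \sum_j a_j \geq m$, then $\sum_i \rho_{a_i}(A,m,T) \leq 3\, P(A,m,T)$. The case $W_A \geq m$ is immediate, since then $P(A,m,T) \geq 1/2$ while $\sum_i \rho_{a_i} \leq 1$ by efficiency and non-negativity of the Deegan--Packel index, yielding ratio at most $2$. So I may assume $W_A < m$ and $T \geq m/2$, and split once more on $W_A$: if $W_A \geq m/2$, then $P \geq 1/3$, so $3P \geq 1 \geq \sum_i \rho_{a_i}$ and we are done.

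The substantive sub-case is $W_A < m/2 \leq T$. The key structural observation is that any all-pivotal coalition $S \in AP$ containing at least one small player must have weight exactly $T$: removing a small player drops the weight by one and must fall below threshold, so $W_B(S) + k(S) - 1 < T \leq W_B(S) + k(S)$, forcing $k(S) = T - W_B(S)$. Here $W_B(S)$ denotes the total big-player weight in $S$, $k(S)$ the number of small players, and $b(S)$ the number of big players. Moreover $W_A < T$ rules out all-pivotal coalitions consisting solely of big players (which would require $W_B \geq T$), so every $S \in AP$ satisfies $k(S) \geq 1$ and the exact identity above.

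Using $b(S) \leq W_B(S)/2$ (each big player has weight at least $2$) and monotonicity of $x \mapsto x/(k+x)$, I obtain the pointwise bound
\[
\frac{b(S)}{|S|} = \frac{b(S)}{k(S)+b(S)} \;\leq\; \frac{W_B(S)/2}{(T-W_B(S)) + W_B(S)/2} \;=\; \frac{W_B(S)}{2T - W_B(S)} \;\leq\; \frac{W_A}{2T - W_A},
\]
the final step using monotonicity of $x \mapsto x/(2T-x)$ on $[0,2T)$ together with $W_B(S) \leq W_A < 2T$. Taking expectation over $S$ drawn uniformly from $AP$ gives $\sum_i \rho_{a_i}(A,m,T) \leq W_A/(2T - W_A)$, and the required inequality $W_A/(2T - W_A) \leq 3 W_A/(m + W_A)$ reduces to $m + 4 W_A \leq 6 T$, which follows from $T \geq m/2$ (so $6T \geq 3m$) and $W_A < m/2$ (so $4 W_A < 2m$). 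The main subtlety is picking the right three-way case split on $W_A$ relative to the thresholds $m$ and $m/2$; once inside the substantive regime, the structural identity $W_B(S) + k(S) = T$ drives the entire pointwise bound and the remaining arithmetic is routine.
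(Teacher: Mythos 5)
Your proof is correct. It shares the paper's high-level strategy: first dispose of the cases where $W_A=\sum_j a_j$ is large relative to $m$ via efficiency and non-negativity of the Deegan--Packel index (the paper deduces $\sum_j a_j<\frac{m}{2}$ directly from the ratio hypothesis; you split three ways on $W_A$ against $m$ and $\frac{m}{2}$), and then bound $\mathbb{E}_{S\sim UNI(AP)}\!\left[\,|S\cap A|/|S|\,\right]$ pointwise by exploiting that a high threshold forces many small players into every all-pivotal coalition. The pointwise estimates are genuinely different, however. The paper replaces $|S\cap A|$ by its global maximum $r$ and the number of small players by its global minimum $l=(T-\sum_j a_j)^+$, obtaining $\sum_i\rho_{a_i}\le \frac{r}{l+r}$, and then needs a further case split on the sign of $T-\sum_j a_j$ together with the substitution $\sum_j a_j=m/\theta$ to conclude $T<\frac{m}{2}$. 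You instead use the exact identity that every all-pivotal coalition containing a small player has total weight exactly $T$ (so $k(S)=T-W_B(S)$), note that $W_A<T$ rules out all-big winning coalitions in your substantive sub-case, combine this with $b(S)\le W_B(S)/2$ and two monotonicity observations to get the per-coalition bound $\frac{W_B(S)}{2T-W_B(S)}\le\frac{W_A}{2T-W_A}$, after which the closing arithmetic $m+4W_A\le 6T$ is immediate from $T\ge\frac{m}{2}$ and $W_A<\frac{m}{2}$. Your estimate is tighter and the final algebra cleaner in the regime $W_A<T$; the paper's formulation via the positive part $(T-\sum_j a_j)^+$ covers both threshold regimes in a single formula at the cost of the extra case analysis. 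Both arguments are valid proofs of the lemma.
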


\begin{proof}
By the lemma's premise, for some $A,m,T$ values,
\begin{equation}
\label{lemmaPrior}
    \frac{\displaystyle \sum\limits_{i=1}^r \rho_{a_i}(A,m,T)}{P(A,m,T)} > 3.
\end{equation}
First, note that we must have
$
\sum\limits_{j=1}^r a_j < \frac{m}{2}.
$
Otherwise,
$$
3 P(A,m,T)\!=\!3 \frac{\sum\limits_{j=1}^r a_j}{m\!+\! \sum\limits_{j=1}^r a_j}\!\geq\!\mathlarger{\frac{3\cdot  \frac{m}{2}}{m\!+\!\frac{m}{2}}}\!\stackrel{eff.}{\geq}\!\sum\limits_{i=1}^r \rho_{a_i}(A,m,T),
$$
\noindent
and dividing both sides by $P(A,m,T)$ contradicts Eq.~\ref{lemmaPrior}. 

Recall that $AP$ is the set of all-pivotal player subsets. Let $l = \min_{S\in AP} \left|S\setminus A\right|$, the least amount of small player participants in an all-pivotal subset $S$. We use the notation $x^+ = \max(x,0)$. It can be directly checked that it must hold that $
    l = (T - \sum\limits_{j=1}^r a_j)^+
$. 
Now,
\begin{equation}
\begin{split}
\label{upperBoundDPEq}
    & \sum\limits_{i=1}^r \rho_{a_i}(A,m,T) = \mathlarger{\mathlarger{\sum\limits_{i=1}^r}}\mathbb{E}_{S\sim UNI(AP)} \left[\frac{\mathbbm{1}_{a_i \in S}}{|S|}\right] = \\
    & \mathbb{E}_{S\sim UNI(AP)} \left[ \sum\limits_{i=1}^r \frac{\mathbbm{1}_{a_i \in S}}{|S|}\right] \stackrel{(1)}{\leq} \mathbb{E}_{S\sim UNI(AP)} \left[\frac{|S \cap A|}{l + |S \cap A|}\right] \leq \\
    & \mathbb{E}_{S\sim UNI(AP)} \left[\frac{r}{l + r}\right] \leq \frac{r}{l+r} = \frac{r}{(T - \displaystyle \sum\limits_{j=1}^r a_j)^+ + r},\\
    \end{split}
    \end{equation}
where (1) is since for the numerator, we have $\sum_{i=1}^r \mathbbm{1}_{{a_i}\in S} = |S \cap A|$. Also, for any all-pivotal set $S$, $l \leq |S\setminus A|$ by its definition, and so for the denominator we have $l + |S\cap A| \leq |S\setminus A| + |S\cap A| = |S|$. 
%
We thus have
\begin{equation}
\label{maxAPsetCondition}
\frac{r}{(T - \displaystyle \sum\limits_{j=1}^r a_j)^+ + r} \stackrel{Eq.~\ref{lemmaPrior}+\ref{upperBoundDPEq}}{>} 3 \frac{\displaystyle \sum\limits_{j=1}^r a_j}{m + \displaystyle \sum\limits_{j=1}^r a_j}. \end{equation}

Finally, we consider two cases. 
\begin{enumerate}

    \item If $\displaystyle T - \sum\limits_{j=1}^r a_j < 0$, we have $\displaystyle T < \sum\limits_{j=1}^r a_j < \frac{m}{2}$.

    \item If $\displaystyle T - \sum\limits_{j=1}^r a_j \geq 0$, we can make the substitution $(T - \sum_{j=1}^r a_j)^+ = T - \sum_{j=1}^r a_j$. We can also write $\sum_{j=1}^r a_j = \frac{m}{\theta}$ for some $\theta > 2$. We can then rearrange Eq.~\ref{maxAPsetCondition} and have
    \[
\begin{split}
& T < \frac{r(m + \sum\limits_{j=1}^r a_j)}{3\sum\limits_{j=1}^r a_j} + \sum\limits_{j=1}^r a_j - r \leq \frac{r(\theta - 2)}{3} + \frac{m}{\theta} \stackrel{(1)}{\leq} \\
& \frac{m(\theta - 2)}{6\theta} + \frac{m}{\theta} < \frac{m}{2}
\end{split}
\]
where (1) is since $ \displaystyle r \leq \frac{\sum_{j=1}^r a_j}{2} = \frac{m}{2\theta}$ (the big players are of size at least 2). 
\end{enumerate}
\end{proof}

\section{Proof of Theorem~\ref{thm:DPpropRatio} for the Low Threshold Regime}
\label{appendix:DPsecondPartProof}

\begin{restatable}{lemma}{DPsecondPart}
\label{DPsecondPart}
For any $A,m,T$ with $T < \frac{m}{2}$ and $\sum\limits_{j=1}^r a_j < m$, 
$$\displaystyle \frac{\sum\limits_{i=1}^r \rho_{a_i}(A,m,T)}{P(A,m,T)} \leq 2.$$ 
\end{restatable}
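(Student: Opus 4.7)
The plan is to use the efficiency of the Deegan-Packel index to convert the claim into a lower bound on $m\rho_1(A,m,T)$, and then establish that lower bound by combinatorial manipulation of binomials which exploits the hypothesis $T<m/2$.

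Write $\Sigma = \sum_{j=1}^r a_j$. By efficiency and the symmetry of $\rho$ across small players, $\sum_{i=1}^r\rho_{a_i}(A,m,T) + m\rho_1(A,m,T) = 1$, so the desired inequality $\sum_i\rho_{a_i}(A,m,T) \leq 2P(A,m,T) = 2\Sigma/(m+\Sigma)$ is equivalent to
\begin{equation*}
m\rho_1(A,m,T) \geq \frac{m-\Sigma}{m+\Sigma},
\end{equation*}
which is strictly positive under the hypothesis $\Sigma<m$. I would then classify all-pivotal coalitions $S\in AP$ by their intersection $B=S\cap A$ with the big players. Writing $K=S\cap M$ and $s_B = \sum_{a\in B}a$, pivotality of any small player in $K$ forces $\sum_S = T$ whenever $K\neq\emptyset$, so $|K|=T-s_B$, and any such coalition is automatically all-pivotal. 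The coalitions with $K=\emptyset$ form the set $AP_0\subseteq 2^A$ of big-only all-pivotal subsets. Substituting the counts $\binom{m}{T-s_B}$ and sizes $|B|+T-s_B$ into the definitions, the target inequality becomes
\begin{equation*}
(m-\Sigma)\!\left[|AP_0|+\!\!\!\!\sum_{\substack{\emptyset\neq B\subseteq A \\ s_B<T}}\!\!\binom{m}{T-s_B}\frac{|B|}{|B|+T-s_B}\right]\leq 2\Sigma\!\left[\binom{m}{T}+\!\!\!\!\sum_{\substack{\emptyset\neq B\subseteq A \\ s_B<T}}\!\!\binom{m}{T-s_B}\frac{T-s_B}{|B|+T-s_B}\right].
\end{equation*}

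I would prove this inequality by exploiting the identity $\binom{m}{T-s}/\binom{m}{T}=\prod_{j=0}^{s-1}(T-j)/(m-T+j+1)\leq q^s$ with $q:=T/(m-T)<1$, which holds precisely because $T<m/2$. Each $\binom{m}{T-s_B}$ appearing on the left-hand side is therefore geometrically small relative to the reservoir $\binom{m}{T}$ on the right-hand side. After substituting this bound, the $B$-indexed sum factors across the big players and can be compared, via the generating-function identity $\sum_{B\subseteq A}q^{s_B} = \prod_{j=1}^r(1+q^{a_j})$, to a controlled multiple of $\Sigma$. The contribution of $|AP_0|$ is handled separately, using that every $B\in AP_0$ satisfies $|B|\leq\Sigma/2$ and $s_B\geq T$, so its count can be absorbed into the same $q^T\binom{m}{T}$ reservoir term by a further application of the ratio identity.

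The main obstacle is that a naive termwise comparison of the two $B$-indexed sums fails, since individual ratios $|B|/(T-s_B)$ can be arbitrarily large (for instance when $s_B$ is close to $T$ and $|B|$ is close to $\Sigma/2$). The proof must therefore aggregate contributions across $B$ and absorb them into the $\binom{m}{T}$ reservoir, balancing the geometric decay $q^{s_B}$ against the potentially large size ratios. The final reduction requires carefully tracking the $(m-\Sigma)/(2\Sigma)$ factor to ensure the reservoir is large enough under the assumption $\Sigma<m$.
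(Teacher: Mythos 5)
Your setup is sound and essentially matches the paper's: classifying all-pivotal coalitions by their big-player part $B$, observing that a coalition containing a small player must have total weight exactly $T$ (hence exactly $T-s_B$ small players, and any such coalition is automatically all-pivotal), and reducing the claim to the displayed inequality between two $B$-indexed sums is the same decomposition the paper uses (there indexed by weight-count tuples $I=(i_2,\dots,i_a)$ with multiplicities $\prod_j\binom{r_j}{i_j}$). The gap is that the core inequality is never proved, and the mechanism you propose for it cannot work as stated. Bounding $\binom{m}{T-s_B}\le q^{s_B}\binom{m}{T}$ and absorbing the entire left-hand sum into the reservoir $2\Sigma\binom{m}{T}$ via $\sum_B q^{s_B}=\prod_j(1+q^{a_j})$ fails quantitatively: take all $a_j=2$, $\Sigma=\sqrt{m}$, $r=\sqrt{m}/2$, and $T=\lfloor m/2\rfloor-1$. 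Then for every $B$ one has $s_B\le\sqrt{m}$ and each ratio $(T-j)/(m-T+j+1)$ is $1-O(1/\sqrt{m})$, so $\binom{m}{T-s_B}$ is at least a constant fraction of $\binom{m}{T}$ for all $B$; the left-hand side is then of order $(m-\Sigma)\binom{m}{T}\sum_k\binom{r}{k}\tfrac{k}{T-k}\approx \binom{m}{T}\,\sqrt{m}\,2^{r}$, which exceeds $2\Sigma\binom{m}{T}=2\sqrt{m}\binom{m}{T}$ by an exponential factor. In this regime the inequality holds only because the $B$-indexed sum on the \emph{right} is comparably large; the reservoir is irrelevant. Conversely, as you note, termwise comparison fails when $T-s_B$ is small. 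The ``balancing'' between these two failure modes is precisely the content of the lemma, and your sketch does not supply it.

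The paper resolves this without any case split over $s_B$, by a discrete summation-by-parts: bound $\tfrac{|B|}{|S|}\le\tfrac{|B|}{(T-s_B)^{+}+1}$, expand $|B|=\sum_k i_k$, and apply the identities $i_k\binom{r_k}{i_k}=r_k\binom{r_k-1}{i_k-1}$ and $\tfrac{1}{L+1}\binom{m}{L}=\tfrac{1}{m+1}\binom{m+1}{L+1}$. This converts the weighted numerator into $\tfrac{r}{m+1}$ times a sum of shifted binomials $\binom{m+1}{(T-\sum_j j\,i_j-2)^{+}+1}$, each of which is at most $2\binom{m}{(T-\sum_j j\,i_j)^{+}}$ precisely because $T<m/2$ keeps these binomials on their increasing side; the resulting bound $\sum_i\rho_{a_i}\le \tfrac{2r}{m+1}$ is then at most $2P(A,m,T)$ using $r\le\Sigma/2$ and $\Sigma<m$. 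Some device of this kind --- relating each $B$ containing a distinguished big player to the configuration obtained by deleting that player, so that the factor $|B|$ is traded for a factor $r$ and a unit shift of the binomial --- is the idea your proposal is missing; without it the plan does not close.
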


\begin{proof}
Let $a = \max A$ and $r_2,...,r_a$ the number of players with weights $2,...,a$ respectively. For some all-pivotal set $S\in AP$, let $i_j(S) = |\{i\}_{a_i = j, a_i \in S}|$ be the number of elements of size $j$ in $S$. Let $APT = \{(i_2(S),...,i_a(S))\}_{S\in AP}$ be the set of all unique tuples $(i_2(S),...,i_a(S))$ such that there is some $S\in AP$ with these size-counts values. Recall the notation $\displaystyle x^+ =\max(x,0)$. For any tuple $I = (i_2,...,i_a) \in APT$, there are $(\prod\limits_{j=2}^a \binom{r_j}{i_j})\binom{m}{(T - \sum\limits_{j=2}^a j \cdot i_j)^+}$ all-pivotal sets $S$ in $AP$ such that $(i_2(S),...,i_a(S)) = I$. This yields overall
$$|AP| = \sum\limits_{I\in APT} \big(\prod\limits_{j=2}^a \binom{r_j}{i_j}\big)\binom{m}{(T - \sum\limits_{j=2}^a j \cdot i_j)^+}.$$

Also for any $S\in AP$ and its corresponding $I\in APT$, it holds that
$$|S\cap A| = \sum\limits_{j=2}^a i_j, |S| = (T - \sum\limits_{j=2}^a j \cdot i_j)^+ + \sum\limits_{j=2}^a i_j.$$

We can write the sum
\begin{equation}
\label{DP_Ineq1}
\begin{split}
    & \sum\limits_{i=1}^r \rho_{a_i}(A,m,T) = \sum\limits_{i=1}^r \mathbb{E}_{S\sim UNI(AP)} \left[\frac{\mathbbm{1}_{a_i \in S}}{|S|}\right] = \frac{1}{|AP|} \sum\limits_{S\in AP} \frac{|S\cap A|}{|S|} = \\
& \frac{\sum\limits_{I\in APT} \big(\prod\limits_{j=2}^a \binom{r_j}{i_j}\big)\binom{m}{(T - \sum\limits_{j=2}^a j \cdot i_j)^+} \frac{\sum\limits_{j=2}^a i_j }{(T - \sum\limits_{j=2}^a j \cdot i_j)^+ + \sum\limits_{j=2}^a i_j } }{\sum\limits_{I\in APT} \big(\prod\limits_{j=2}^a \binom{r_j}{i_j}\big)\binom{m}{(T - \sum\limits_{j=2}^a j \cdot i_j)^+}}.
    \end{split}
    \end{equation}

Claim~\ref{nominatorInequalities} of the appendix shows that if $T < \frac{m}{2}$, then 
\[
\begin{split}
    & \sum\limits_{I\in APT} \big(\prod\limits_{j=2}^a \binom{r_j}{i_j}\big)\binom{m}{(T - \sum\limits_{j=2}^a j \cdot i_j)^+} \frac{\sum\limits_{j=2}^a i_j }{(T - \sum\limits_{j=2}^a j \cdot i_j)^+ + \sum\limits_{j=2}^a i_j } \leq \\
    & \frac{r}{m+1} \sum\limits_{\substack{I \in APT \\ \sum\limits_{j=2}^a j\cdot i_j < T }} \big(\prod\limits_{j=2}^a \binom{r_j}{i_j}\big) \binom{m + 1}{(T - \sum\limits_{j=2}^a j\cdot i_j - 2)^+ + 1}.
    \end{split}
\]
Then, Claim~\ref{nomDenomRatio} of the appendix states that for any $I\in APT$ with $\sum\limits_{j=2}^a j\cdot i_j < T < \frac{m}{2}$, it holds that $\displaystyle \binom{m + 1}{(T - \sum\limits_{j=2}^a j\cdot i_j - 2)^+ + 1} \leq 2\binom{m}{(T - \sum\limits_{j=2}^a j \cdot i_j)^+}$. By the two claims and Eq.~\ref{DP_Ineq1}, we have

\[
\begin{split}
& \sum\limits_{i=1}^r \rho_{a_i}(A,m,T) \stackrel{(Eq.~\ref{DP_Ineq1})}{=} \frac{\sum\limits_{I\in APT} \big(\prod\limits_{j=2}^a \binom{r_j}{i_j}\big)\binom{m}{(T - \sum\limits_{j=2}^a j \cdot i_j)^+} \frac{\sum\limits_{j=2}^a i_j }{(T - \sum\limits_{j=2}^a j \cdot i_j)^+ + \sum\limits_{j=2}^a i_j } }{\sum\limits_{I\in APT} \big(\prod\limits_{j=2}^a \binom{r_j}{i_j}\big)\binom{m}{(T - \sum\limits_{j=2}^a j \cdot i_j)^+}} \stackrel{(\ref{nominatorInequalities})}{\leq} \\
& \frac{r}{m+1} \cdot \frac{1}{\sum\limits_{I\in APT} \big(\prod\limits_{j=2}^a \binom{r_j}{i_j}\big)\binom{m}{(T - \sum\limits_{j=2}^a j \cdot i_j)^+}} \sum\limits_{\substack{I \in APT \\ \sum\limits_{j=2}^a j\cdot i_j < T }} \big(\prod\limits_{j=2}^a \binom{r_j}{i_j}\big) \binom{m + 1}{(T - \sum\limits_{j=2}^a j\cdot i_j - 2)^+ + 1} \stackrel{(\ref{nomDenomRatio})}{\leq} \\
& 2\frac{r}{m+1} \cdot \frac{1}{\sum\limits_{I\in APT} \big(\prod\limits_{j=2}^a \binom{r_j}{i_j}\big)\binom{m}{(T - \sum\limits_{j=2}^a j \cdot i_j)^+}} \sum\limits_{\substack{I \in APT \\ \sum\limits_{j=2}^a j\cdot i_j < T }} \big(\prod\limits_{j=2}^a \binom{r_j}{i_j}\big) \binom{m}{(T - \sum\limits_{j=2}^a j \cdot i_j)^+} \leq 2\frac{r}{m+1}
\end{split}
\]

We can then follow up with
\[
\begin{split}
    & \sum\limits_{i=1}^r \rho_{a_i}(A,m,T) \leq 2\frac{r}{m+1} = \frac{m+\sum\limits_{j=1}^r a_j}{m+1}\cdot \frac{2r}{m+\sum\limits_{j=1}^r a_j}\leq 2 \cdot \frac{\sum\limits_{j=1}^r a_j}{m + \sum\limits_{j=1}^r a_j} = 2P(A,m,T) ,
    \end{split}
    \]
with the inequality going from the first to second line due to $\displaystyle \sum\limits_{j=1}^r a_j < m$ (from the lemma assumptions) and $\displaystyle r \leq \frac{\sum\limits_{j=1}^r a_j}{2}$ (big players are at least of size 2). 
\end{proof}

\section{Technical Claims for the Deegan-Packel Upper Bound}
\begin{claim}
\label{nominatorInequalities}
If $T < \frac{m}{2}$, then 
\[
\begin{split}
    & \sum\limits_{I\in APT} \bigg(\prod\limits_{j=2}^a \binom{r_j}{i_j}\bigg)\binom{m}{(T - \sum\limits_{j=2}^a j \cdot i_j)^+} \frac{\sum\limits_{j=2}^a i_j }{(T - \sum\limits_{j=2}^a j \cdot i_j)^+ + \sum\limits_{j=2}^a i_j } \leq \\
    & \frac{r}{m+1} \sum\limits_{\substack{I \in APT \\ \sum\limits_{j=2}^a j\cdot i_j < T }} \bigg(\prod\limits_{j=2}^a \binom{r_j}{i_j}\bigg) \binom{m + 1}{(T - \sum\limits_{j=2}^a j\cdot i_j - 2)^+ + 1}.
    \end{split}
    \]
\end{claim}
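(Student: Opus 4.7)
The plan is to bound the LHS by transforming the ratio $\frac{k}{(T-s)^++k}$ into a form matching the RHS binomials, and then exploiting the condition $T<\tfrac{m}{2}$ at the end. For brevity let $s=s(I)=\sum_j j\cdot i_j$, $k=k(I)=\sum_j i_j$, $c(I)=\prod_j\binom{r_j}{i_j}$, and $n(s)=\sum_{I\in APT,\,s(I)=s}c(I)$.

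First, I would split $APT$ according to whether $s<T$ (``case A'', where the all-pivotal set has $\ell=T-s\geq 1$ small players) or $s\geq T$ (``case B'', no small players). In case A, using $k\geq 1$, I apply $\frac{k}{\ell+k}\leq \frac{k}{\ell+1}$ together with the identity $\binom{m}{\ell}/(\ell+1)=\binom{m+1}{\ell+1}/(m+1)$, giving
\[
c(I)\binom{m}{T-s}\frac{k}{T-s+k} \;\leq\; \frac{k\,c(I)\,\binom{m+1}{T-s+1}}{m+1}.
\]
Using $k\leq r$ and summing by $s$, the case-A portion of the LHS is bounded by $\frac{r}{m+1}\sum_{s<T}n(s)\binom{m+1}{T-s+1}$.

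Second, I would re-index the sum to align with the RHS binomials $\binom{m+1}{(T-s-2)^++1}$ (which equal $\binom{m+1}{T-s-1}$ for $s\leq T-3$ and $m+1$ for $s\in\{T-2,T-1\}$). Since $T<m/2$ forces $T-s+1\leq(m+1)/2$, both $\binom{m+1}{T-s+1}$ and $\binom{m+1}{T-s-1}$ lie in the increasing regime of the binomial. An index shift $s\mapsto s+2$ would pair case-A LHS terms with RHS terms, creating slack. The case-B contribution $\sum_{I\in APT,\,s(I)\geq T}c(I)$ (appearing only on the LHS) would then be absorbed via slack from the all-small-players tuple $I=(0,\dots,0)$, which contributes $0$ to the LHS but $\frac{r\binom{m+1}{T-1}}{m+1}$ to the RHS, since case-B tuples satisfy the narrow range $T\leq s(I)<T+\max_j j$.

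The main obstacle will be the looseness of the bound $\frac{k}{\ell+k}\leq\frac{k}{\ell+1}$ combined with $k\leq r$: this discards both the $k=0$ contribution and fine-grained correlations between $k$ and $s$. I anticipate needing a tighter, non-term-by-term argument -- perhaps via the integral representation $\frac{k}{\ell+k}=k\int_0^1 x^{\ell+k-1}\,dx$ to convert sums into generating-function identities for $p_s(x)=\sum_k n(s,k)x^k$, or via an explicit injection from case-B tuples to case-A tuples with one fewer big player, where the RHS factor $r$ accounts for the bookkeeping of the map.
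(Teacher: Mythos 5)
There is a genuine gap at the step where you convert the factor $k=\sum_j i_j$ into the factor $r$. Your opening moves match the paper's: restricting to tuples with $k\geq 1$, the bound $\frac{k}{\ell+k}\leq\frac{k}{\ell+1}$, and the identity $\binom{m}{\ell}\frac{1}{\ell+1}=\frac{1}{m+1}\binom{m+1}{\ell+1}$. But after applying $k\leq r$ you are left with $\frac{r}{m+1}\sum_{s<T}n(s)\binom{m+1}{T-s+1}$, whereas the target is $\frac{r}{m+1}\sum_{s<T}n(s)\binom{m+1}{(T-s-2)^++1}$. Since $T<\frac{m}{2}$ puts all these arguments in the \emph{increasing} regime of $\binom{m+1}{\cdot}$, each of your terms is at least as large as the corresponding target term, so the chain ``LHS $\leq$ intermediate $\leq$ RHS'' breaks at the second inequality. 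The index shift $s\mapsto s+2$ you propose to repair this does not work: it would pair the coefficient $n(s)$ on the left with $n(s-2)$ on the right, and there is no monotonicity relating $n(s)$ to $n(s-2)$ for general weight profiles. Your handling of the $s\geq T$ tuples (absorbing $\sum_{s\geq T}n(s)$ into the slack of the all-zero tuple's RHS contribution) is likewise not established and is not obviously true.

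The paper's proof avoids both problems with a single device: instead of bounding $k\leq r$ globally, it writes $k=\sum_{k'}i_{k'}$ and uses $\binom{r_{k'}}{i_{k'}}\,i_{k'}=r_{k'}\binom{r_{k'}-1}{i_{k'}-1}\leq r_{k'}\binom{r_{k'}}{i_{k'}-1}$. Each unit of the numerator is thus ``spent'' to decrement one coordinate $i_{k'}$ by one, which (i) produces the factor $r=\sum_{k'}r_{k'}$ after summing over $k'$, (ii) decreases $s$ by $k'\geq 2$, so the binomial argument $(T-s)^++1$ of the original tuple is at most $(T-s'-2)^++1$ for the new tuple $I'$ (using monotonicity, which is where $T<\frac{m}{2}$ enters), and (iii) automatically maps every tuple --- including those with $s\geq T$ --- to a tuple with $\sum_j j\cdot i'_j<T$, because removing one big player from an all-pivotal set must drop it below the threshold. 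This is precisely the ``injection to a tuple with one fewer big player, with $r$ accounting for the bookkeeping'' that you mention in your last sentence as a fallback; it needs to be the main engine of the proof, not a patch.
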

\begin{proof}
We begin with the following inequalities: 
\begin{equation}
\label{nominator1}
\begin{split}
    & \sum\limits_{I\in APT} \bigg(\prod\limits_{j=2}^a \binom{r_j}{i_j}\bigg)\binom{m}{(T - \sum\limits_{j=2}^a j \cdot i_j)^+} \frac{\sum\limits_{j=2}^a i_j }{(T - \sum\limits_{j=2}^a j \cdot i_j)^+ + \sum\limits_{j=2}^a i_j } \stackrel{(1)}{=} \\
    & \sum\limits_{\substack{I\in APT \\ \sum\limits_{j=2}^a i_j \geq 1}} \bigg(\prod\limits_{j=2}^a \binom{r_j}{i_j}\bigg)\binom{m}{(T - \sum\limits_{j=2}^a j \cdot i_j)^+} \frac{\sum\limits_{j=2}^a i_j }{(T - \sum\limits_{j=2}^a j \cdot i_j)^+ + \sum\limits_{j=2}^a i_j } \stackrel{(2)}{\leq} \\
    & \sum\limits_{\substack{I\in APT \\ \sum\limits_{j=2}^a i_j \geq 1}} \bigg(\prod\limits_{j=2}^a \binom{r_j}{i_j}\bigg)\binom{m}{(T - \sum\limits_{j=2}^a j \cdot i_j)^+} \frac{\sum\limits_{j=2}^a i_j }{(T - \sum\limits_{j=2}^a j \cdot i_j)^+ + 1 },
    \end{split}
   \end{equation}
   
   where the transitions hold due to that:
   
   \begin{enumerate}
       \item Any all-pivotal size-counts tuple $I\in APT$ with $\sum\limits_{j=2}^a i_j = 0$ (and there is exactly one like that, which corresponds to the case where the all-pivotal set is comprised only of small players), will not contribute to the sum due to the $\sum\limits_{j=2}^a i_j$ expression in the numerator of the summand. 
   
   \item Since we only sum over $I\in APT$ with $\sum_{j=2}^a i_j \geq 1$, this condition is true for all summands, and so we can replace this expression in the inequality with 1 and draw the inequality. 

   \end{enumerate}

    Now notice that for each element $I$, we can rewrite
    \begin{equation}
    \label{elementWiseEquations}
            \begin{split}
        & \bigg(\prod\limits_{j=2}^a \binom{r_j}{i_j}\bigg)\binom{m}{(T - \sum\limits_{j=2}^a j \cdot i_j)^+} \frac{\sum\limits_{j=2}^a i_j }{(T - \sum\limits_{j=2}^a j \cdot i_j)^+ + 1 } \stackrel{(1)}{=} \\
        & \sum\limits_{2\leq k \leq a} \bigg(\prod\limits_{j=2}^a \binom{r_j}{i_j}\bigg)\binom{m}{(T - \sum\limits_{j=2}^a j \cdot i_j)^+} \frac{i_k }{(T - \sum\limits_{j=2}^a j \cdot i_j)^+ + 1 } \stackrel{(2)}{=} \\
        & \sum\limits_{\substack{2\leq k \leq a \\ i_k \neq 0}} \binom{r_k}{i_k} i_k \bigg(\prod\limits_{2\leq j \neq k \leq a} \binom{r_j}{i_j}\bigg)\binom{m}{(T - \sum\limits_{2\leq j\neq k \leq a}j \cdot i_j - k \cdot i_k )^+} \frac{1}{(T - \sum\limits_{2\leq j\neq k \leq a} j \cdot i_j - ki_k )^+ + 1 } \stackrel{(3)}{=} \\
        & \sum\limits_{\substack{2\leq k \leq a \\ i_k \neq 0}} \binom{r_k - 1}{i_k - 1} r_k \bigg(\prod\limits_{2\leq j \neq k \leq a} \binom{r_j}{i_j}\bigg)\binom{m + 1}{(T - \sum\limits_{2\leq j\neq k \leq a}j \cdot i_j - k \cdot i_k )^+ + 1} \frac{1}{m + 1}, 
    \end{split}
    \end{equation}

    with transitions due to:
    \begin{enumerate}
        \item Moving summation outside.
        \item All summands with $i_k = 0$ don't contribute to the sum, so we can limit the summation to $k$ values with $i_k \neq 0$. We can also isolate the $k$ terms in all the products and sub-sums of the summands. 
        
        \item In general for $a\geq b \geq 1$, it holds that $\binom{a - 1}{b - 1} \cdot a = \binom{a}{b} b$, by just moving terms in and out the binomial coefficient. We use this identity twice. 
    \end{enumerate}

    Now we may continue with 
    \begin{align*}
    &  \sum\limits_{\substack{I\in APT \\ \sum\limits_{j=2}^a i_j \geq 1}} \bigg(\prod\limits_{j=2}^a \binom{r_j}{i_j}\bigg)\binom{m}{(T - \sum\limits_{j=2}^a j \cdot i_j)^+} \frac{\sum\limits_{j=2}^a i_j }{(T - \sum\limits_{j=2}^a j \cdot i_j)^+ + 1 } \stackrel{(Eq.~\ref{elementWiseEquations})}= \\
    & \sum\limits_{\substack{I \in APT \\ \sum\limits_{j=2}^a i_j \geq 1}} \sum\limits_{\substack{2\leq k \leq a \\ i_k \neq 0}}  \binom{r_k - 1}{i_k - 1}r_k \bigg(\prod\limits_{2\leq j\neq k\leq a} \binom{r_j}{i_j}\bigg) \binom{m + 1}{(T - \sum\limits_{2\leq j\neq k \leq a}j \cdot i_j - k \cdot i_k )^+ + 1} \frac{1}{m+1} \stackrel{(1)}{=} \\
    & \sum\limits_{k=2}^{a} \sum\limits_{l_k=1}^{r_k} \sum\limits_{\substack{I \in APT \\ i_k = l_k}} \binom{r_k - 1}{i_k - 1} r_k  \bigg(\prod\limits_{2\leq j\neq k\leq a} \binom{r_j}{i_j}\bigg) \binom{m + 1}{(T - \sum\limits_{2\leq j\neq k \leq a}j \cdot i_j - k \cdot i_k )^+ + 1}\frac{1}{m+1} \stackrel{(2)}{=} \\
    & \frac{1}{m+1} \sum\limits_{k=2}^{a} r_k \sum\limits_{l_k=0}^{r_k-1} \sum\limits_{\substack{I \in APT \\ i_k = l_k + 1}}   \binom{r_k - 1}{l_k} \bigg(\prod\limits_{2\leq j\neq k\leq a} \binom{r_j}{i_j}\bigg) \binom{m + 1}{(T - \sum\limits_{2\leq j\neq k \leq a}j \cdot i_j - k \cdot l_k - k)^+ + 1} \stackrel{(3)}{\leq} \\
    & \frac{1}{m+1} \sum\limits_{k=2}^{a} r_k \sum\limits_{l_k=0}^{r_k-1} \sum\limits_{\substack{I \in APT \\ i_k = l_k + 1}} \binom{r_k}{l_k} \bigg(\prod\limits_{2\leq j\neq k\leq a} \binom{r_j}{i_j}\bigg) \binom{m + 1}{(T - \sum\limits_{2\leq j\neq k \leq a}j \cdot i_j - k \cdot l_k - k)^+ + 1} \stackrel{(4)}{\leq} \\
      & \frac{1}{m+1} \sum\limits_{k=2}^{a} r_k \sum\limits_{l_k=0}^{r_k-1} \sum\limits_{\substack{I \in APT \\ i_k = l_k + 1}} \binom{r_k}{l_k} \bigg(\prod\limits_{2\leq j\neq k\leq a} \binom{r_j}{i_j}\bigg) \binom{m + 1}{(T - \sum\limits_{2\leq j\neq k \leq a}j \cdot i_j - k \cdot l_k - 2)^+ + 1} \stackrel{(5)}{=} \\
      & \frac{1}{m+1} \sum\limits_{k=2}^{a} r_k \sum\limits_{\substack{I \in APT \\ \sum\limits_{j=2}^a j\cdot i_j < T }} \binom{r_k}{i_k} \bigg(\prod\limits_{2\leq j\neq k\leq a} \binom{r_j}{i_j}\bigg) \binom{m + 1}{(T - \sum\limits_{2\leq j\neq k \leq a}j \cdot i_j - k \cdot i_k - 2)^+ + 1} \stackrel{(6)}{=} \\
& \frac{1}{m+1} \sum\limits_{k=2}^{a} r_k \sum\limits_{\substack{I \in APT \\ \sum\limits_{j=2}^a j\cdot i_j < T }} \bigg(\prod\limits_{j=2}^a \binom{r_j}{i_j}\bigg) \binom{m + 1}{(T - \sum\limits_{j=2}^a j\cdot i_j - 2)^+ + 1} \stackrel{(7)}{=} \\
    & \frac{r}{m+1} \sum\limits_{\substack{I \in APT \\ \sum\limits_{j=2}^a j\cdot i_j < T }} \bigg(\prod\limits_{j=2}^a \binom{r_j}{i_j}\bigg) \binom{m + 1}{(T - \sum\limits_{j=2}^a j\cdot i_j - 2)^+ + 1}.
    \end{align*}
    
    with the transitions due to:
    \begin{enumerate}
        \item First, notice the summands do not change. We switch the order of summation, and separate the iteration over sets in $APT$, to first consider all possible non-zero values of $i_k$, namely $l_k$ between 1 and $r_k$. 
        
        \item We move the $\frac{1}{m+1}$ expression outside of all sums since it's not dependent on any sum parameter. We move the $r_k$ expression to the first sum since it only depends on $k$. We move $l_k$ to run from 0 to $r_k - 1$ instead of $1$ to $r_k$, and we make the substitution $i_k = l_k + 1$ in the summand.  
        
        \item Binomial coefficients are monotone increasing in the first argument, and so $\binom{r_k - 1}{l_k} \leq \binom{r_k}{l_k}$ for each summand.
        
        \item For a binomial coefficient $\binom{a}{b}$ with $b < \frac{a}{2}$, it is monotone increasing in the second argument. Since $T < \frac{m}{2}$, it holds in our case, and since $k\geq 2$, the second argument indeed non-decreasing by the substitution $k \rightarrow 2$. 
        
        \item Whenever $0\leq l_k \leq r_k - 1$ and $I\in APT$ has $i_k = l_k + 1$, we have $\displaystyle \sum\limits_{2\leq j\neq k\leq a} j \cdot i_j + k \cdot i_k = \sum\limits_{2\leq j\neq k\leq a} j \cdot i_j + k \cdot l_k + k < T + k$, otherwise some player of size $k$ would not be pivotal, and $I$ corresponds to an all-pivotal set. This is equivalent to $\displaystyle \sum\limits_{2\leq j\neq k\leq a} j \cdot i_j + k \cdot l_k < T $, so if we sum over all elements $I \in APT$ that have $\displaystyle \sum\limits_{j = 2}^a j \cdot i_j < T$ (Notice that for ease of notation we make the substitution $l_k = i_k$, we make the same substitution in the summand - it is ok since $i_k = l_k + 1$ is not in our summation assumptions anymore). 
        
        \item Isolating the $k$ element in the sub-sums and products is no longer required. 
        
        \item By definition $\displaystyle r = \sum\limits_{k=2}^a r_k$. 

    \end{enumerate}

\end{proof}
\begin{claim}
\label{nomDenomRatio}
For any $I\in APT$ with $\sum\limits_{j=2}^a j\cdot i_j < T < \frac{m}{2}$, it holds that $\binom{m + 1}{(T - \sum\limits_{j=2}^a j\cdot i_j - 2)^+ + 1} \leq 2\binom{m}{(T - \sum\limits_{j=2}^a j \cdot i_j)^+}$. 
\end{claim}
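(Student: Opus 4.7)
The plan is to reduce the inequality to a clean ratio calculation by introducing the abbreviation $t = T - \sum_{j=2}^a j \cdot i_j$. The hypothesis $\sum_{j=2}^a j \cdot i_j < T$ combined with integrality forces $t \geq 1$, so $t^+ = t$. The hypothesis $T < m/2$ yields $t \leq T < m/2$, i.e.\ $m > 2t$. The target inequality becomes
$$\binom{m+1}{(t-2)^+ + 1} \leq 2\binom{m}{t}.$$

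First I would dispatch the two small cases directly. For $t = 1$, the claim reduces to $m+1 \leq 2m$, which holds since $m > 2$. For $t = 2$, it reduces to $m+1 \leq m(m-1)$, which is immediate from $m > 4$. In both cases the upper bound $t < m/2$ makes $m$ comfortably large enough.

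The main case is $t \geq 3$, where $(t-2)^+ + 1 = t-1$, so the claim becomes $\binom{m+1}{t-1} \leq 2\binom{m}{t}$. I would compute the ratio of the two binomial coefficients explicitly as
$$\frac{\binom{m+1}{t-1}}{\binom{m}{t}} = \frac{(m+1)\,t}{(m-t+1)(m-t+2)}.$$
It then suffices to show this ratio is at most $2$, i.e.\ $(m+1)t \leq 2(m-t+1)(m-t+2)$. Using $m > 2t$ (so $m-t > t$), both factors in the denominator exceed $t$, and one can further note $m - t + 2 > (m+1)/2$ (equivalent to $m + 3 > 2t$, which is weaker than $m > 2t$). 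Multiplying the two denominator factors together thus gives $(m-t+1)(m-t+2) > t \cdot \tfrac{m+1}{2}$, yielding the desired bound.

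The argument is entirely elementary; the only point that requires care is properly tracking the $(\cdot)^+$ operator in the exponents of the binomial coefficients and verifying that the strict inequalities $T < m/2$ and $\sum_{j=2}^a j \cdot i_j < T$ give enough slack in the small-$t$ boundary cases. No deep obstacle is expected.
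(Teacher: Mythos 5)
Your proposal is correct and follows essentially the same route as the paper: handle the boundary case $t=1$ directly, and for larger $t$ compute the explicit ratio $\binom{m+1}{t-1}/\binom{m}{t} = \frac{(m+1)t}{(m-t+1)(m-t+2)}$ and bound it by $2$ using $t < m/2$. The only cosmetic difference is that you split off $t=2$ as a separate case, whereas the paper absorbs it into the general case since $(t-2)^+ + 1 = t-1$ already holds for $t=2$.
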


\begin{proof}
Let $I\in APT$ be some tuple of all-pivotal size-counts, with $\sum\limits_{j=2}^a j\cdot i_j < T$. 

We separate to two cases: 

\begin{enumerate}
\item If $\sum\limits_{j=2}^a j\cdot i_j = T - 1$, then $$\binom{m + 1}{(T - \sum\limits_{j=2}^a j\cdot i_j - 2)^+ + 1} = \binom{m+1}{1} = m+1 \leq 2m = 2\binom{m}{1} = 2\binom{m}{(T - \sum\limits_{j=2}^a j \cdot i_j)^+}. $$

\item If $\sum\limits_{j=2}^a j\cdot i_j < T - 1$, then by our assumption $T < \frac{m}{2}$ and so 
\begin{equation}
\label{simpleInequalityForClaim}
T - \sum\limits_{j=2}^a j\cdot i_j \leq m+1 - (T - \sum\limits_{j=2}^a j\cdot i_j) 
\end{equation}

which we can follow with 
\begin{equation}
\label{mainTechnicalTransitionForClaim}
\begin{split}
    &  \frac{\displaystyle \binom{m+1}{T -  \sum\limits_{j=2}^a j\cdot i_j - 1}}{ \displaystyle \binom{m}{T - \sum\limits_{j=2}^a j \cdot i_j}} = \frac{\displaystyle (m+1)(T - \sum\limits_{j=2}^a j \cdot i_j)}{\displaystyle (m - T + \sum\limits_{j=2}^a j\cdot i_j + 1)(m - T + \sum\limits_{j=2}^a j\cdot i_j + 2)} \stackrel{(Eq.~\ref{simpleInequalityForClaim})}{\leq} \\
    & \frac{m+1}{m - T + \displaystyle \sum\limits_{j=2}^a j\cdot i_j + 2} \leq \frac{m+1}{ \frac{m+1}{2}} = 2.
\end{split}
\end{equation}

Thus

\[
\begin{split}
    & \binom{m + 1}{(T - \sum\limits_{j=2}^a j\cdot i_j - 2)^+ + 1} = \binom{m + 1}{T - \sum\limits_{j=2}^a j\cdot i_j - 1} \stackrel{(Eq.~\ref{mainTechnicalTransitionForClaim})}{\leq} 2\binom{m}{T - \sum\limits_{j=2}^a j \cdot i_j} = 2\binom{m}{(T - \sum\limits_{j=2}^a j \cdot i_j)^+}.
\end{split}
\]

\end{enumerate}
\end{proof}


\section{Second Recursion Lemma and Duality}
\label{recursionAppendix}
\begin{lemma}
\label{recursionLemma2}
The following recursion holds for $\phi_1$

\[
\begin{split}
 & \phi_1(A,m,T) = \\
 & \begin{dcases}
  \frac{1}{m+r} &  T = m + \sum\limits_{j=1}^r a_j \\
 \frac{1}{m+r} (\sum\limits_{\substack{1\leq i \leq r \\ m + \sum\limits_{1\leq j \neq i \leq r} a_j \geq T}} \phi_1(A\!\setminus\!\{a_i\},m,T) + (m\!-\!1) \phi_1(A,m\!-\!1,T)) & T < m + \sum\limits_{j=1}^r a_j
 \end{dcases}.
\end{split}
\]

\end{lemma}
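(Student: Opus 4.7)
The plan is to mirror the proof of Lemma~\ref{recursionLemma} but condition on the identity of the \emph{last} player in the permutation, rather than the first. Fix an arbitrary small player $p$ and expand $\phi_1(A,m,T) = \mathbb{E}_{\sigma\sim UNI(S_{m+r})}[\mathbbm{1}_{p,\sigma}]$ via the law of total expectation over which element of $A+M$ occupies position $m+r$ of $\sigma$. Each player lands last with probability $\tfrac{1}{m+r}$, and conditional on a fixed last player, the induced distribution on the remaining $m+r-1$ positions is the uniform distribution over permutations of the remaining players, so each conditional expectation is itself a Shapley-Shubik probability in a smaller WVG.

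For the base case $T = m + \sum_{j=1}^r a_j$, the total weight equals the threshold, so for any permutation $\sigma$ the unique pivotal player is whoever occupies position $m+r$. Hence $p$ is pivotal iff $p$ is last, which happens with probability $\tfrac{1}{m+r}$. For the inductive case $T < m + \sum_{j=1}^r a_j$, I will enumerate four subcases for the last player $q$:
\begin{itemize}
\item $q = a_i$ with $m + \sum_{j\neq i} a_j < T$: here removing $a_i$ drops the weight below $T$, so $a_i$ itself is pivotal and $p$ is not. These terms contribute $0$, which is why the sum in the recursion is restricted to indices $i$ with $m + \sum_{j\neq i} a_j \geq T$.
\item $q = a_i$ with $m + \sum_{j\neq i} a_j \geq T$: conditional on $a_i$ being last, $p$ is pivotal iff it is pivotal in a uniformly random ordering of the players of $(A\setminus\{a_i\}, m, T)$, giving conditional probability $\phi_1(A\setminus\{a_i\}, m, T)$.
\item $q$ is a small player different from $p$ (there are $m-1$ such players): conditional on such a $q$ being last, $p$ is pivotal iff it is pivotal in the residual WVG $(A, m-1, T)$, contributing $(m-1)\phi_1(A,m-1,T)$.
\item $q = p$: then $p$ is pivotal iff $m - 1 + \sum_j a_j < T$, which together with $T < m + \sum_j a_j$ and integrality forces $T = m + \sum_j a_j$, contradicting the inductive case. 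So this contributes $0$.
\end{itemize}
Summing these four contributions and dividing by $m+r$ gives exactly the stated recursion.

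The only subtlety worth double-checking is the claim that conditional on a fixed last player, the remaining positions form a uniformly random permutation of the remaining players; this is a standard fact about the uniform distribution on $S_{m+r}$ and is exactly the same lemma implicitly used in the proof of Lemma~\ref{recursionLemma}. One should also verify that the reduced games are well-defined, i.e., that the threshold in the "small player last" case still satisfies $1 \leq T \leq (m-1) + \sum_j a_j$: the lower bound is immediate, and the upper bound follows from $T < m + \sum_j a_j$ together with integrality of both sides. No serious obstacle is anticipated; the result is essentially the time-reversed analogue of Lemma~\ref{recursionLemma}, and uses the invariance of the uniform distribution on $S_{m+r}$ under reversing permutations.
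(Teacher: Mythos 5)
Your proposal is correct and follows essentially the same argument as the paper: condition on the identity of the last player in the permutation, observe that the case $m+\sum_{j\neq i}a_j < T$ and the case where the fixed small player is last both contribute zero, and reduce the remaining cases to the smaller games $(A\setminus\{a_i\},m,T)$ and $(A,m-1,T)$. The paper's proof is just a terser version of the same case analysis, so there is nothing to add.
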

\begin{proof}

Fix some small player, for which we measure the expected number of permutations where it is pivotal. The recursion is done by conditional expectation on the identity of the last player in the permutation. If $T=m+\sum\limits_{j=1}^r a_j$, then if and only if our fixed player is last, is it pivotal, which happens in probability $\frac{1}{m+r}$. For $T<m+\sum\limits_{j=1}^r a_j$, if the last player in the permutation is $a_i$ and $m + \sum\limits_{1\leq j \neq i \leq r} a_j < T$, then our fixed player is not pivotal. If it is some big player $i$ with $m + \sum\limits_{1\leq j \neq i \leq r} a_j \geq T$, then the problem reduces to the weighted voting game with parameters $(A \setminus \{a_i\}, m, T)$. Similarly if a small player different than our fixed player is last (for which there are $m-1$ alternatives), the parameters are $(A, m-1, T)$. If our fixed player is last, it will not be pivotal. All above events are with a uniform probability of $\frac{1}{m+r}$. 
\end{proof}

\noindent
{\bf Remark. } The recursions in Lemma~\ref{recursionLemma} and Lemma~\ref{recursionLemma2} are dual and can be obtained from one another using the following lemma:

\begin{lemma}
The following duality holds for $\Phi$

$$\Phi_1(A,m,T) = \Phi_1(A,m, m + \sum\limits_{j=1}^r a_j - T + 1). $$
\end{lemma}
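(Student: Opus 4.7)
My plan is to exhibit an explicit weight-preserving bijection on $S_{m+r}$ that maps the pivotal indicator for threshold $T$ to the pivotal indicator for threshold $T' = m + \sum_{j=1}^{r} a_j - T + 1$. Since both $\phi_1(A,m,T)$ and $\phi_1(A,m,T')$ are expectations of the pivotal indicator under the uniform distribution on $S_{m+r}$, showing such a bijection immediately yields the claimed equality.

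The bijection I will use is the reversal map $\sigma \mapsto \bar{\sigma}$, where $\bar{\sigma}(A+M)$ is the reverse-ordered tuple of $\sigma(A+M)$. This is clearly an involution, hence a bijection, and pushes the uniform distribution to itself. Let $W = m + \sum_{j=1}^{r} a_j$ be the total weight. The heart of the proof is the following per-permutation observation: fix our small player $p$ at position $k$ in $\sigma$, so that $p$ sits at position $m+r-k+1$ in $\bar\sigma$. Writing $x = w(\sigma|_p)$ for the weight of the players strictly preceding $p$ in $\sigma$, so that $w(\bar\sigma|_p) = W - x - 1$ (since $p$ has weight $1$) and $w(\bar{\bar\sigma}|_p) = W - x$, a direct calculation gives
\begin{align*}
\mathbbm{1}_{p,\sigma} = 1 \;\;&\Longleftrightarrow\;\; x < T \le x+1 \\
&\Longleftrightarrow\;\; W - x - 1 < T' \le W - x \\
&\Longleftrightarrow\;\; \mathbbm{1}_{p,\bar\sigma} = 1 \text{ in the game with threshold } T'.
\end{align*}
The equivalence in the middle line is just the substitution $T' = W - T + 1$ applied to both inequalities.

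Since reversal is a bijection on $S_{m+r}$ and the pivotal indicator transforms as above,
\[
\phi_1(A,m,T) = \mathbb{E}_{\sigma}[\mathbbm{1}_{p,\sigma}^{(T)}] = \mathbb{E}_{\sigma}[\mathbbm{1}_{p,\bar\sigma}^{(T')}] = \mathbb{E}_{\tau}[\mathbbm{1}_{p,\tau}^{(T')}] = \phi_1(A,m,T'),
\]
where the third equality re-indexes via $\tau = \bar\sigma$. The only subtle point I would expect to be mildly delicate is the boundary case checking that the strict/non-strict inequalities flip correctly under reversal (which forces the $+1$ in $T' = W - T + 1$); this is why I explicitly track both $w(\sigma|_p)$ and $w(\bar\sigma|_p)$ rather than just one of them. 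Note also that the argument does not actually use that $p$ has weight $1$ except to write $w(\bar\sigma|_p) = W - x - 1$; the analogous duality holds for any fixed player with the appropriate substitution, which is why the duality lets one convert the recursion of Lemma~\ref{recursionLemma} (conditioning on the first player) into Lemma~\ref{recursionLemma2} (conditioning on the last player).
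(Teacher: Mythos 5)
Your proof is correct and is exactly the argument the paper has in mind: the paper omits the proof but indicates that it follows by ``passing the threshold from the left or from the right,'' which is precisely your reversal bijection, and your inequality bookkeeping (yielding the $+1$ in $T' = W - T + 1$) checks out. The only cosmetic issue is that your use of $\bar\sigma$ for the reversed permutation collides with the paper's notation $\bar\sigma|_p$ for the non-strictly-preceding set, so you would want to rename one of them.
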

The proof of this lemma is simple and therefore omitted. Essentially, the lemma follows since for every permutation, one may iterate over the set of players to pass the threshold 'from the left' or 'from the right'.

\bibliography{ref.bib}
\bibliographystyle{theapa}

\end{document}